\documentclass[11pt,a4paper]{article}

\usepackage[T1]{fontenc}
\usepackage[utf8]{inputenc}
\usepackage{lmodern}
\usepackage{microtype}
\usepackage{amsmath,amssymb,amsthm,mathtools}
\usepackage{booktabs}
\usepackage{enumitem}
\usepackage{xcolor}
\usepackage[hidelinks]{hyperref}
\usepackage[nameinlink,capitalise]{cleveref}
\usepackage[a4paper,margin=30mm]{geometry}

\hypersetup{
  pdftitle={Substantive Agency and Computational Non-Anticipability: An Axiomatic Route to a Conditional Separation of P and NP},
  pdfauthor={Jérôme Clech},
  pdfsubject={Substantive agency, computability, and conditional search-complexity consequences},
  pdfkeywords={free will, agency, singularisation, computability, search complexity, FP, FNP, P versus NP}
}

\newtheorem{theorem}{Theorem}[section]
\newtheorem{proposition}[theorem]{Proposition}
\newtheorem{lemma}[theorem]{Lemma}
\newtheorem{corollary}[theorem]{Corollary}
\theoremstyle{definition}
\newtheorem{definition}[theorem]{Definition}
\newtheorem{countermodel}[theorem]{Countermodel}
\theoremstyle{remark}
\newtheorem{remark}[theorem]{Remark}

\newcommand{\GLUE}{\mathsf{GLUE}}
\newcommand{\SELECT}{\mathsf{SELECT}}
\newcommand{\FPS}{\mathrm{FP}_{\mathrm{search}}}
\newcommand{\FNPS}{\mathrm{FNP}_{\mathrm{search}}}
\newcommand{\TFNPS}{\mathrm{TFNP}_{\mathrm{search}}}
\newcommand{\RSEL}{R_{\mathsf{SEL}}}
\newcommand{\poly}{\mathrm{poly}}
\newcommand{\Aut}{\operatorname{Aut}}
\newcommand{\LAeff}{\mathrm{LA}_{\!\mathrm{eff}}}
\newcommand{\LAc}{\mathrm{LA}_{\!c}}
\newcommand{\LAnc}{\mathrm{LA}_{\!\mathrm{nc}}}
\newcommand{\LAo}{\mathrm{LA}_{\!o}}
\newcommand{\AIK}{\mathsf{AIK}}

\title{Substantive Agency and Computational Non-Anticipability:\\
An Axiomatic Route to a Conditional\\
Separation of $P$ and $NP$}

\author{J\'er\^ome Clech\\
\small Colonel, PhD, Habilitation to Supervise Research (HDR)\\
\small Chairholder, Chair of Applied Air and Space Strategies\\
\small Centre for Aerospace Strategic Studies, French Air and Space Force\\
\small Associate Researcher, Technology and Global Affairs Innovation Hub\\
\small Paris School of International Affairs (PSIA), Sciences Po\\
\small \texttt{jerome.clech@sciencespo.fr}}

\date{Version 0.17 -- 23 September 2026}

\begin{document}

\maketitle

\begin{abstract}
This paper characterises substantive agency and identifies the additional bridges under which it has a standard complexity-theoretic consequence: conditionally, $P\ne NP$. The constitutive core separates coherent plurality, causal openness, anticipatory non-pointing, act-level singularisation, and endogenous sourcehood. Its temporal claim yields a two-sided modal result: exact passive pre-act selection is incompatible with jointly retaining singularising priority and invariance under causally inert informational extension. If every standard polynomial procedure is deployable in that passive form, the actualisation map $\alpha$ has no polynomial-time selector. On an effective presentation, $\alpha$ is therefore either computable outside FP or noncomputable. A separate certified-actualisation protocol supplies polynomially bounded, polynomially verifiable traces with extensional unique projection. This is the constructive interface that turns historical actualisation into a total standard search relation; combined with the modal result, it places that relation in $\TFNPS\setminus\FPS$, from which a self-contained search argument yields $P\ne NP$. The result is a domain-relative conditional transfer theorem: it neither assumes polynomial non-anticipability in the definition of agency nor proves that human decision-making satisfies the bridge and certification premises.
\end{abstract}

\noindent\textbf{Keywords:} free will; agency; singularisation; sheaf theory; global selection; computability; search complexity; FP; FNP; $P$ versus $NP$.

\section{Introduction}

Free-will debates often move too quickly between three different questions. The first is conceptual: what distinguishes a free act from deterministic execution, chance, coercion, or mere retrospective attribution? The second is representational: how can that distinction be expressed in a mathematical structure without reducing it to a verbal stipulation? The third is computational: if the realised act can be verified after it occurs but cannot be selected efficiently beforehand, what follows for standard search complexity? These questions are connected, but they are not interchangeable.

Two public preprints address the computational and ontological ends of this chain \cite{Clech2026a,Clech2026b}. They are antecedents and extended technical records, not premises that the reader must accept. The computational preprint studies a finite choice structure in which local constraints determine a set of globally admissible continuations, denoted by $\GLUE$, while a distinct operation $\SELECT$ identifies the continuation actually realised. The ontological preprint studies the stronger case in which the realised sequence is causally incompressible and no uniform Turing predictor computes it from pre-act histories. The present paper restates and proves every bridge needed for its own main conclusion; it refers to the preprints only for fuller development of their respective models.

The present paper addresses the missing bridge. It does not assume that ``free will'' means polynomial non-anticipability. Causal non-determination excludes a prior determining mechanism occurrence by occurrence, regardless of cost. Actual anticipatory non-pointing follows from singularising priority under actual access-completeness. The computational question is asked only afterwards. The modal result is first an incompatibility: passive-extension invariance says that a causally inert read-only device cannot alter the substantive status of the act, whereas an exact output present inside the enlarged pre-act frame removes singularising priority. Effective deployment adds the further claim that any FP procedure can in principle be implemented in that passive form. The resulting FP exclusion and computable/noncomputable bifurcation are exhaustive only in this declared domain, while certified actualisation remains a further interface.

Two results organise the paper. First, the modal incompatibility theorem isolates the exact conflict among substantive singularising priority, passive-extension invariance, and deployable exact prediction; effective deployment then excludes every FP selector. Second, the certified-actualisation theorem shows that short verifiable traces with unique projection define a total search relation without a polynomial selector and therefore yield, conditionally, $P\ne NP$. The first result concerns the status of pre-act selection; the second supplies the independent bridge to standard complexity theory.

The central distinction is between \emph{admissibility} and \emph{actualisation}. Local and global constraints may determine a non-singleton set of coherent continuations. They do not thereby determine which continuation becomes historical. An act transforms an unpointed possibility structure into a pointed one. We call this transition \emph{singularisation}. The act has this substantive office only if it supplies the first exact point inside the relevant closed agency--information frame. If a coupled knower already possesses and can operationally supply the exact future point, the knower and decider need not be the same biological individual: they nevertheless belong to one informationally relevant composite system, whose pre-act state is already pointed. If the alleged foreknowledge is wholly isolated and unavailable to that system, it is not a member of the pre-act pointing class and does not violate non-pointing. Singularisation is indispensable, but it is not the sole basis of the account: without plurality there is nothing to singularise; without causal openness the issue is already determined; without anticipatory non-pointing the relevant frame is already pointed; without sourcehood the transition may be noise; without coherence the output need not count as a continuation of the prior situation.

A related representational point motivates, but does not prove, the later results. A physical situation becomes a formal problem through an operation that fixes a vocabulary, variables, relevance criteria, and an objective; a second operation produces finite instances; gluing identifies coherent continuations; and actualisation selects the historical one. Thus a TSP instance does not require a human operator once encoded, while the constitution and interpretation of applied problems remain agent-relative practices. No step of the complexity proof depends on that observation.

Our claim must be stated with care. No finite mathematical theorem can force every philosophical use of the expression ``free will'' to adopt one definition. A compatibilist may identify freedom with reasons-responsiveness or absence of direct coercion even under determinism \cite{Frankfurt1969,FischerRavizza1998}. The defensible claim is conditional but constraining: any conception that intends to preserve real alternatives, pre-act openness, the act as the event which first singularises one continuation, and the agent as the ultimate source of that transition must instantiate the structural core defined below. Rejecting the core requires abandoning at least one of those commitments, not merely changing terminology. The compatibilist analysis constrains that constitutive disagreement without supplying the later computational bridges: substantive non-reducibility yields the causal clause (A3-C), while singularising priority in a closed pre-act frame yields anticipatory non-pointing (A3-P). Passive deployment and certified actualisation remain independent, contestable additions.

The contribution is not any one of plurality, unpredictability, sourcehood, sheaf representation, or search complexity taken separately. Each has an established literature. It is the formally stratified bridge between them. More precisely, the paper contributes:
\begin{enumerate}[label=(\roman*)]
  \item an axiomatic separation of causal openness, anticipatory non-pointing, act-level singularisation, and endogenous sourcehood, with countermodels showing their distinct roles;
  \item a pre-pointing explication: in an actually access-complete frame, substantive singularising priority entails actual anticipatory non-pointing;
  \item a two-sided modal dilemma separating singularising priority, passive-extension invariance, and deployability of an exact selector, with the conceptual cost of each refusal made explicit and the FP exclusion stated as a corollary under effective deployment;
  \item a realisation-gap theorem showing that the computable branch does not by itself possess polynomially verifiable certificates;
  \item a certified-graph interface showing that short retrospective certificates with extensional unique projection collapse the bifurcation to the computable non-FP branch and generate a TFNP search relation with no polynomial selector;
  \item a self-contained conditional separation theorem, together with exact statements of what is constitutive, what is computational, and what remains an empirical correspondence hypothesis.
\end{enumerate}

The paper is organised accordingly. \Cref{sec:levels} separates the levels of analysis and \cref{sec:related} locates the differential contribution. \Cref{sec:axioms} gives the agential core, and \cref{sec:countermodels} tests the role of its clauses. \Cref{sec:compatibilism} isolates the compatibilist boundary. \Cref{sec:representation} gives the choice-frame and sheaf-theoretic representation. \Cref{sec:prediction} proves the modal result and the FP corollary. \Cref{sec:objectification} states the certification protocol and proves the realisation gap. \Cref{sec:interfaces} proves the exhaustive bifurcation and the conditional complexity consequence. \Cref{sec:scope} states the scope, correspondence conditions, and points of refusal.

\section{Five levels that must not be conflated}\label{sec:levels}

The argument has five levels, summarised in \cref{tab:levels}.

\begin{table}[ht]
\centering
\small
\begin{tabular}{p{0.19\textwidth}p{0.35\textwidth}p{0.34\textwidth}}
\toprule
Level & Principal question & What it does not yet supply \\
\midrule
Constitutive & Does the process exhibit plurality, causal openness, non-pointing, actualisation, and sourcehood? & No encoding, complexity bound, or certificate \\
Representational & Can that structure be expressed by choice frames and local-to-global sections? & No effective manipulation or lower bound \\
Effective presentation & Do occurrences and continuations have uniform finite encodings? & No short post-act certificates \\
Certified objectification & Are completed acts witnessed by short, polynomially verifiable traces with unique projection? & No selector and no unconditional class separation \\
Complexity & Does the resulting total search relation have a polynomial selector? & No direct metaphysical or empirical conclusion \\
\bottomrule
\end{tabular}
\caption{The five levels of the argument and their logical separation.}
\label{tab:levels}
\end{table}

This stratification blocks two common circularities. First, polynomial non-anticipability is not inserted into the constitutive definition. Second, polynomial traceability is not treated as a condition for an act to be free. Third, effective computability is not conflated with certified actualisation. These properties enter through distinct bridge principles.

\section{Related work and differential contribution}\label{sec:related}

The present programme meets several established literatures, but their conclusions should not be conflated.

\paragraph{Computational irreducibility and prediction.}
Computational irreducibility and no-shortcut arguments explain why the future behaviour of a deterministic process may be unavailable substantially earlier than the process itself produces it. Lloyd applies diagonal and time-complexity considerations to decision makers and explicitly presents the result as an explanation of the \emph{impression} of free will \cite{Lloyd2013}. Wolpert's inference-device framework proves law-independent limitations on the inferences available to physical devices embedded in one universe \cite{Wolpert2008}. These are genuine impossibility results, but neither identifies substantive agency with the first act-level pointing of one member of a coherent plurality, and neither supplies the certified unique-projection interface used below.

\paragraph{Computational sourcehood.}
Krumm and Mueller distinguish mere computational irreducibility from \emph{computational sourcehood}: roughly, reproducing a process's behaviour should require a structure-preserving representation of the relevant process itself \cite{KrummMueller2021}. Their deterministic, compatibilist analysis uses simulation preorders between Turing machines. The present criterion is not a refinement of that preorder. Functional ownership or unavoidable simulation is insufficient for pointwise (A3-C) when a complete prior mechanism fixes the act. Conversely, our substantive core alone supplies no complexity lower bound: the transfer requires separately auditable modal closure and certification premises. The differential contribution is this conditional interface among incompatibilist sourcehood, deployable prediction, unique projection, and standard search complexity.

Azadi more recently argues in the opposite explanatory direction, deriving conditions for autonomous behaviour from undecidability, computational irreducibility, and agent--environment coupling \cite{Azadi2025}. The present paper does not infer agency from computational hardness. It begins with an independently characterised substantive-agency model and asks what follows only when passive-extension and certification interfaces are added. This direction-of-explanation difference is essential: computational difficulty is neither A3-C nor A5 here.

\paragraph{Formal and process models of choice.}
Durham models a free choice as a singular process between macrostates and proposes a quantitative measure of adaptive freedom \cite{Durham2020}. That behavioural construction does not derive anticipatory non-pointing from singularising priority and does not connect a post-act certificate to standard search complexity. In philosophy, sourcehood, alternative possibilities, reasons-responsiveness, manipulation, and luck already form distinct axes of the debate \cite{Frankfurt1969,FischerRavizza1998,VanInwagen1983,Pereboom2001,Kane1996,Clarke2003,Mele2006}. The axioms below do not purport to replace that literature; they state exactly which commitments are required for the computational bridge.

List's recent account of free will in AI provides the closest contemporary contrast within the target philosophical domain \cite{List2025}. It treats intentional agency, alternative possibilities, and causal control as a pragmatic and explanatorily useful criterion, while explicitly declining to require indeterminism or unpredictability. The present paper does not dispute the usefulness of that functional notion. It asks what further structure is required by the stronger claim that the realised continuation is not fixed by causally sufficient pre-act conditions and that the act first supplies its exact point. The difference is therefore one of explanatory target: functional agency and control on List's account, substantive sourcehood and singularising priority here.

\paragraph{Sheaves, contextuality, and global sections.}
Sheaf-theoretic contextuality uses the obstruction to a compatible global section as a signature of contextuality \cite{AbramskyBrandenburger2011}. Our use is structurally different: a choice frame may possess several global sections. The issue is not non-existence of a global section but absence of an intrinsic or operationally available point selecting the section later actualised. Consequently, the sheaf language organises local compatibility and symmetry; it does not establish a complexity lower bound.

\paragraph{Total search and unique projection.}
The distinction between polynomial verification and polynomial search belongs to standard complexity theory, as do total search classes and search-to-decision constructions \cite{Papadimitriou1994,MegiddoPapadimitriou1991,AroraBarak2009}. The new claim is not that FNP or total search exists. It is that, under a separately justified agency model and certification protocol, unique projection makes any solver of the associated relation compute the very actualisation map excluded from FP by effective-pointing closure.

\begin{table}[t]
\centering
\small
\begin{tabular}{p{0.23\textwidth}p{0.29\textwidth}p{0.36\textwidth}}
\toprule
Literature & Established focus & Additional step supplied here \\
\midrule
Irreducibility and inference limits & No shortcut or universal predictor for specified process classes & Closed-frame pre-pointing criterion tied to substantive singularisation \\
Computational sourcehood & Structure-preserving simulation and functional sourcehood under determinism & Causal non-determination kept distinct from computational cost \\
Process models of choice & Singular transitions and behavioural measures of freedom & Axiomatic sourcehood, closure, and exact computational interfaces \\
Sheaf contextuality & Compatibility and obstruction to global sections & Several global sections with no pre-act point selecting the actual one \\
FNP and total search & Verification, totality, and algorithmic witness search & Historical certificates whose unique projection recovers actualisation \\
\bottomrule
\end{tabular}
\caption{Differential location of the present contribution. No row alone yields the main result; the contribution is the proved bridge across the rows.}
\label{tab:related}
\end{table}

The resulting novelty claim is therefore deliberately limited. We do not claim priority for computational unpredictability, sourcehood, sheaf models, or total search. We claim an explicit chain of independently auditable premises and theorems connecting substantive singularisation to a computability bifurcation and, under certified realisation, to a standard search-class consequence. The realisation-gap theorem is essential to that claim because it prevents the final certification premise from being smuggled into the earlier philosophical characterisation.

\section{Substantive agency: desiderata and formal core}\label{sec:axioms}

Let $Q$ be a set of occurrences. Each $q\in Q$ has a pre-act history $h(q)$ in a history space $H$. Let $G(q)$ be the set of globally coherent continuations compatible with that history. An actualisation map, defined on completed occurrences, assigns
\[
 \alpha(q)\in G(q).
\]
The notation records the realised continuation extensionally; it does not assume that $\alpha(q)$ was available or determined as a selected value before the act.

Before listing the constitutive clauses, we make explicit why anticipatory non-pointing is required by substantive singularisation rather than appended as a stronger computational condition.

\begin{definition}[Closed pre-act agency--information frame]\label{def:information-frame}
For an occurrence $q$, the declared agency--information system contains the deciding agent and every observer, device, or process informationally coupled to the decision situation before the act. Write $\mathcal P_{\rm act}(q)\subseteq G(q)$ for the continuations exactly pointed within that system at that occurrence. Its pre-act representation is \emph{actually access-complete} when every exact point operationally present anywhere within the declared system before the act belongs to $\mathcal P_{\rm act}(q)$. This closure concerns actual availability; it neither asserts that every true proposition is known nor turns the mere mathematical existence of an algorithm into an operationally present device.
\end{definition}

\begin{definition}[Substantive singularising priority]\label{def:singularising-priority}
An act has substantive singularising priority, relative to a closed agency--information frame, when it is the transition at which the realised continuation first becomes an exact point of that frame. Merely producing in history a continuation which was already exactly pointed within the frame is realisation, but not substantive singularisation by the act.
\end{definition}

The distinction rests on the act's \emph{differentiating contribution}. Realisation answers whether a continuation eventually occurs; substantive singularisation asks where the exact difference by which this continuation, rather than another admissible one, first enters the relevant frame. An act may remain causally necessary for the physical occurrence of a continuation already pointed in advance. It does not follow that the act also supplied the differentiating point. When that point was operationally available inside the closed pre-act system, the later occurrence implements or realises a prior distinction instead of originating that distinction at act level.

This claim does not identify knowledge with causation. A foreknowing process need not causally fix the act. It concerns the different question recorded by $\mathcal P_{\rm act}(q)$: whether the exact distinction among admissible continuations was already present and available before the occurrence. Nor does mere truth about the future suffice. A true but wholly inaccessible proposition supplies no operational point to the declared frame. The differentiating-contribution criterion therefore motivates anticipatory non-pointing without collapsing (A3-P) into causal non-determination (A3-C).

\begin{proposition}[Pre-pointing explication]\label{prop:prepointing-collapse}
In an actually access-complete pre-act agency--information frame, substantive singularising priority entails actual anticipatory non-pointing (A3-P).
\end{proposition}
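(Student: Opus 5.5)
The argument is a short proof by contradiction that unpacks the two definitions. A3-P has not yet been formally stated, so I first fix its intended content from the surrounding text. \emph{Actual anticipatory non-pointing} at occurrence $q$ will mean that the realised continuation $\alpha(q)$ is not exactly pointed within the declared pre-act agency--information system. Equivalently, $\alpha(q)\notin\mathcal P_{\rm act}(q)$, with $\mathcal P_{\rm act}(q)$ as in \cref{def:information-frame}. The goal is then to show that if the act has substantive singularising priority (\cref{def:singularising-priority}) and the frame is actually access-complete, then $\alpha(q)\notin\mathcal P_{\rm act}(q)$.

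Suppose the act has singularising priority, but A3-P fails. Then some exact point for $\alpha(q)$ is operationally present somewhere in the declared system before the act. This could be in the agent, in a coupled observer, or in a device.

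First I use actual access-completeness. Every exact point operationally present anywhere in the declared system before the act belongs to $\mathcal P_{\rm act}(q)$. Hence $\alpha(q)\in\mathcal P_{\rm act}(q)$, so the realised continuation is already an exact point \emph{of the frame} before the act, not merely of some isolated subsystem.

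Second I use singularising priority. By \cref{def:singularising-priority}, the act is the transition at which $\alpha(q)$ \emph{first} becomes an exact point of that frame. This contradicts the first step. The second clause of the definition makes the same point directly: producing a continuation that was already pointed is realisation, not substantive singularisation. So A3-P holds.

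The main obstacle is not the logic but keeping the scope of ``pointed'' consistent across the two definitions. Access-completeness is phrased in terms of points present \emph{anywhere within the declared system}, while priority is phrased in terms of points \emph{of that frame}. The key step is to check that these are the same set, namely $\mathcal P_{\rm act}(q)$, relative to one and the same closed frame.

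I would also remark that both hypotheses are needed. Without access-completeness, a point held by a coupled but unregistered device would not enter $\mathcal P_{\rm act}(q)$. Then priority relative to $\mathcal P_{\rm act}(q)$ would be compatible with actual pre-pointing, which is the gap the hypothesis closes. In the other direction, a merely true but inaccessible proposition is not an operationally present point. So the entailment does not require, or yield, any claim about truth-value determinacy of the future, and it leaves A3-C untouched.
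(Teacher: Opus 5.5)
Your proposal is correct and follows essentially the paper's own argument. It is a reductio from the assumption that $\alpha(q)$ is already pointed, with actual access-completeness used to place that point inside the one closed frame, whether the agent or a coupled predictor holds it; the act therefore cannot be the transition at which $\alpha(q)$ first becomes an exact point, contrary to \cref{def:singularising-priority}. Your closing remark that inaccessible truths are not operationally present points matches the paper's observation that wholly isolated foreknowledge never enters $\mathcal P_{\rm act}(q)$, so it is no counterexample.
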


\begin{proof}
Fix $q$ and suppose instead that $\alpha(q)\in\mathcal P_{\rm act}(q)$. By actual access-completeness, that continuation is already exactly pointed within the relevant system before the act. It is immaterial whether the point is held by the deciding organism or by a distinct but informationally coupled predictor: for the question of when the relevant system first acquires the distinction, they are components of the same closed frame. The subsequent act may realise the pointed continuation, but it cannot be the transition which first points it, contrary to substantive singularising priority. Conversely, foreknowledge wholly isolated from and operationally unavailable to the declared system is not an element of $\mathcal P_{\rm act}(q)$ and is not a counterexample.
\end{proof}

This proposition is an analytic explication of the temporal office assigned to the act, not the paper's technical lower-bound result. It concerns points actually present before an occurrence. The later passage from a merely existing polynomial algorithm to a deployable point requires a separate modal bridge.

\begin{definition}[Causal forcing semantics]\label{def:causal-forcing}
Fix a declared causal interpretation $\mathcal C$. For every history $h(q)$, let $\Omega_{\mathcal C}(q)$ be the nonempty set of nomologically admissible complete evolutions extending that history under the pre-act laws and mechanisms included in $\mathcal C$. Write
\[
 h(q)\Vdash_{\mathcal C}s
 \quad\Longleftrightarrow\quad
 \text{every }\omega\in\Omega_{\mathcal C}(q)\text{ continues with }s.
\]
Set $\mathcal D(q)=\{s\in G(q):h(q)\Vdash_{\mathcal C}s\}$. Since distinct continuations are mutually exclusive, $\mathcal D(q)$ is empty or a singleton. The causal interpretation, its system boundary, and its admissible evolutions must be fixed independently of the realised value $\alpha(q)$.
\end{definition}

This semantics does not solve the metaphysics of agent causation. It gives A3-C a precise model-relative meaning: the complete declared pre-act causal structure does not force the historical continuation across all its admissible evolutions. Sourcehood in A5 remains an additional attribution predicate and cannot be manufactured from non-determination alone.

An application must also justify the admissibility of its causal interpretation rather than select $\Omega_{\mathcal C}(q)$ opportunistically. At minimum, the declared interpretation should be complete relative to the physical or causal theory being used, stable under counterfactual variation of irrelevant details, and invariant under effective redescriptions which preserve the causal structure. Fixing $\mathcal C$ independently of $\alpha$ prevents direct ex post tailoring; these further adequacy conditions prevent causal openness from being manufactured merely by omitting a determining mechanism from the model.

\begin{definition}[Substantive agency frame]\label{def:saf}
A substantive agency frame is a tuple
\[
 \mathfrak A=(Q,H,h,G,\alpha,\mathcal D,\mathcal P_{\rm act})
\]
where $\mathcal D(q)$ is induced by a causal interpretation as in \cref{def:causal-forcing}, while $\mathcal P_{\rm act}(q)\subseteq G(q)$ contains the continuations exactly pointed and operationally present before that occurrence. These assignments are fixed independently of the completed graph of $\alpha$. Causal forcing is not defined by running time; actual pointing concerns informational presence rather than causal production. The frame satisfies the following requirements.
\begin{description}[style=nextline]
 \item[(A1) Local and global determination.] The history $h(q)$ imposes non-trivial constraints, and $G(q)$ contains precisely the continuations coherent with those constraints.
 \item[(A2) Admissible plurality.] $|G(q)|\ge 2$ for every occurrence in the relevant family.
 \item[(A3-C) Causal pre-act non-determination.] For every $q\in Q$, $\alpha(q)\notin\mathcal D(q)$. Thus no occurrence is counted as substantively open merely because a different member of the family lacks a uniform determining rule.
 \item[(A3-P) Actual anticipatory non-pointing.] For every $q\in Q$, $\alpha(q)\notin\mathcal P_{\rm act}(q)$. Thus the realised continuation is not already exactly pointed in the actual pre-act system. By \cref{prop:prepointing-collapse}, this is the necessary formal consequence of singularising priority in an actually access-complete frame.
 \item[(A4) Act-level actualisation.] The occurrence produces one definite continuation $\alpha(q)\in G(q)$.
 \item[(A5) Endogenous sourcehood.] The transition to $\alpha(q)$ is attributed to the agency represented by the frame, rather than to an exogenous intervention or to a merely independent randomiser.
\end{description}
\end{definition}

Conditions (A3-P) and (A4) jointly formalise substantive singularisation: the realised continuation is not an exact point actually present in the access-complete pre-act frame, and the act produces the definite post-act continuation. Thus (A3-P) is retained as an explicit constitutive clause for auditability, while (A4) is not burdened with a second copy of the same non-pointing requirement.

The assignments $\mathcal D(q)$ and $\mathcal P_{\rm act}(q)$ serve different purposes. An internal deterministic policy places its output in $\mathcal D(q)$ even when evaluation is super-polynomial. Computability of the completed graph does not by itself establish such causal provenance. An infallible observer may instead place the outcome in $\mathcal P_{\rm act}(q)$ without placing it in $\mathcal D(q)$. Pointwise formulation also blocks a loophole in which each occurrence is fixed or pointed by a different mechanism although no single global map covers the family.

Condition (A5) remains a philosophical interface axiom rather than a complete mathematical theory of agent causation: the choice-frame alone cannot manufacture endogenous authorship. Any application must supply an independently defensible agency locus and account of causal attribution. That locus need not be identified a priori with the ordinary biological individual; it may in principle be individual, distributed, collective, or transindividual. Changing it changes the frame and therefore requires the causal class $\mathcal D$ and the sourcehood attribution to be reassessed. Accordingly, the axioms provide a necessary-condition framework for conceptions committed to openness, sourcehood, and singularising priority; they do not prove that a candidate event satisfies A5.

\subsection{Five meanings of non-predetermination}

We distinguish five notions:
\begin{enumerate}[label=(\roman*)]
 \item \emph{causal non-determination}: no map grounded in the complete pre-act causal structure fixes the actual continuation;
 \item \emph{anticipatory non-pointing}: no exact point of the realised continuation is operationally available from the complete pre-act representation;
 \item \emph{structural non-pointing}: no continuation is distinguished by the symmetries and relations internal to the pre-act representation;
 \item \emph{computational non-anticipability}: $\alpha\notin\mathrm{FP}$ on the fixed uniform encoding;
 \item \emph{computability-level non-anticipability}: no Turing procedure computes $\alpha$ from the encoded histories.
\end{enumerate}

As bare predicates, no implication among these notions is automatic. A section can be structurally undistinguished yet externally selected by an arbitrary convention. A deterministic mechanism may be causally fixing but computationally expensive. An infallible predictor may point to an outcome without causing it. Conversely, a computable regularity in the completed occurrence map need not constitute either the causal mechanism or an actually available pre-act point. \Cref{prop:prepointing-collapse} derives actual non-pointing from singularising priority and actual access-completeness. Effective-pointing closure then adds the distinct modal bridge from exact efficient computation to admissible deployability and robust non-pointing.

\subsection{Openness and sourcehood are independent}

\begin{proposition}[Logical independence]\label{prop:independence}
Pre-act openness and endogenous sourcehood do not imply one another.
\end{proposition}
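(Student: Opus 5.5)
The plan is to prove logical independence by exhibiting two small countermodels within substantive agency frames (\cref{def:saf}), one for each direction. By ``pre-act openness'' I mean the conjunction of (A3-C) and (A3-P). By ``endogenous sourcehood'' I mean (A5). Since (A5) is an attribution predicate that the choice-frame does not itself manufacture, each countermodel will fix a declared causal interpretation $\mathcal C$, a pointing assignment $\mathcal P_{\rm act}$, and an agency locus. Each will then verify the relevant clauses pointwise for every $q$.

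\emph{Openness without sourcehood.} Take a single occurrence $q$ with $G(q)=\{s_0,s_1\}$, so that (A1) and (A2) hold. Declare $\mathcal C$ so that $\Omega_{\mathcal C}(q)$ contains evolutions continuing with each of $s_0$ and $s_1$; this makes $\mathcal D(q)=\emptyset$. Set $\mathcal P_{\rm act}(q)=\emptyset$, since no coupled device holds an exact point. Then let $\alpha(q)=s_1$ be produced by an independent randomiser, for instance a quantum coin that is causally isolated from the agent's reasons and states. (A3-C) and (A3-P) hold trivially. (A5) fails by its explicit exclusion of ``a merely independent randomiser''; an exogenous intervention would serve equally well. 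This direction is straightforward.

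\emph{Sourcehood without openness.} Keep the same $G(q)$. This time let $\mathcal C$ include a complete internal deterministic policy of the agent, for example a deliberative mechanism that maps $h(q)$ to $s_1$. Every $\omega\in\Omega_{\mathcal C}(q)$ then continues with $s_1$, so $\mathcal D(q)=\{s_1\}=\{\alpha(q)\}$ and (A3-C) fails. Sourcehood in the attributive sense, meaning endogenous authorship rather than exogenous intervention or independent noise, is nonetheless satisfied, because the transition issues from the agent's own mechanism. This is the compatibilist-style case that the paper discusses around \cite{Frankfurt1969,FischerRavizza1998}. To make the failure of openness robust, I would optionally also place $s_1$ in $\mathcal P_{\rm act}(q)$ via a coupled predictor, which violates (A3-P) as well.

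The main obstacle is the second direction. A critic may argue that (A5), read incompatibilistically as \emph{ultimate} sourcehood, already requires non-determination, in which case the countermodel would beg the question. I would address this by stressing that (A5) as stated in \cref{def:saf} is only an attribution predicate, contrasting agency with exogenous intervention and independent randomisation. It contains no causal-openness clause, and the paper itself says that sourcehood ``cannot be manufactured from non-determination alone.'' Independence is therefore a claim about the predicates as defined, and the proof should state this explicitly. It should also note that any stronger reading, in which ultimacy entails (A3-C), would be a further axiom and not a consequence of (A5).
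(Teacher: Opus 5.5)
Your proposal is correct and follows the paper's own proof. The paper uses the same two countermodels: a physical randomiser gives openness without (A5), and a deterministic reasons-responsive policy owned by the agent gives sourcehood ``in a reduced sense'' without (A3-C). Your version makes that informal argument explicit through $\Omega_{\mathcal C}(q)$, $\mathcal D(q)$ and $\mathcal P_{\rm act}(q)$. Your remark that (A5) is only an attribution predicate corresponds to the paper's ``reduced sense'' qualifier.
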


\begin{proof}
A physical randomiser can choose between two admissible continuations without the realised value being attributable to the agent; this supplies openness without sourcehood. Conversely, a deterministic reasons-responsive mechanism can count as the agent's own mechanism on a compatibilist account while its output is fixed by the complete prior state; this supplies sourcehood in a reduced sense without openness. Hence neither property entails the other.
\end{proof}

The conjunction of (A3-C) and (A5), not either property alone, excludes both causally deterministic unfolding and pure chance. Condition (A3-P) is not caused by (A3-C), but it is required by the different temporal claim that the act itself first points the realised continuation inside the closed frame \cite{Zagzebski1991}.

\begin{proposition}[Bare causality--pointing independence]\label{prop:causal-pointing-independence}
Absent substantive singularising priority and access-completeness, causal non-determination and anticipatory non-pointing do not imply one another.
\end{proposition}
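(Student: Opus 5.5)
The plan is to exhibit two small countermodels, one for each direction of non-implication. Both are substantive agency frames in the sense of \cref{def:saf}, except that singularising priority (\cref{def:singularising-priority}) and actual access-completeness (\cref{def:information-frame}) are deliberately left out. Each countermodel uses a single occurrence $q$ with two admissible continuations, $G(q)=\{s_0,s_1\}$, so that (A1), (A2) and (A4) hold trivially. The only things that vary are the causal interpretation $\mathcal C$, which induces $\mathcal D(q)$ through \cref{def:causal-forcing}, and the pointing assignment $\mathcal P_{\rm act}(q)$. Both are fixed before the value of $\alpha(q)$ is read off, as \cref{def:saf} requires.

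\textbf{Non-determination without non-pointing.} Take $\Omega_{\mathcal C}(q)$ to contain two admissible evolutions, one continuing with $s_0$ and the other with $s_1$. Then $\mathcal D(q)=\varnothing$, and (A3-C) holds whichever value $\alpha(q)$ takes. Now add to the declared system an infallible but causally inert observer, in the sense already flagged after \cref{def:saf}, who operationally holds the exact point $\alpha(q)=s_1$ before the act. This gives $\mathcal P_{\rm act}(q)=\{s_1\}\ni\alpha(q)$, so (A3-P) fails. I will check that nothing in \cref{def:causal-forcing} links forcing to informational presence, so this assignment is consistent.

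\textbf{Non-pointing without non-determination.} Take a deterministic interpretation in which every $\omega\in\Omega_{\mathcal C}(q)$ continues with $s_1$. This could be an internal policy whose evaluation is super-polynomial or even unobserved. Then $\mathcal D(q)=\{s_1\}\ni\alpha(q)$, so (A3-C) fails. At the same time, no component of the declared system holds an exact point before the act, so $\mathcal P_{\rm act}(q)=\varnothing$ and (A3-P) holds. This is where access-completeness is being dropped. Pointing concerns operational presence, not causal fixation, and the five-meanings discussion explicitly separates a causally fixing mechanism from an available pre-act point.

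The main obstacle is making sure the second countermodel does not secretly appeal to the hypotheses that have been set aside. One could object that a determining mechanism inside the frame already counts as pointing. I would answer by citing the wording of \cref{def:information-frame}: $\mathcal P_{\rm act}$ records exact points operationally present, and the mere existence of a determining process is explicitly not turned into such a point. I would then note that once access-completeness and singularising priority are reinstated, \cref{prop:prepointing-collapse} restores an implication toward (A3-P). This is why the proposition is stated relative to their absence.
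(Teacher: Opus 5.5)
Your proposal is correct and is essentially the paper's own argument: a causally inert exact observer yields (A3-C) without (A3-P), and a determining internal mechanism that is not operationally available as a pre-act point yields (A3-P) without (A3-C); you just make both explicit as one-occurrence models. One small correction to your commentary: your second model does not drop access-completeness (no exact point is operationally present, so it holds vacuously); the hypothesis that actually fails is singularising priority in your first model, where the observer sits inside an access-complete frame and \cref{prop:prepointing-collapse} rules it out.
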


\begin{proof}
An exact non-intervening observer can point to the realised continuation while no pre-act mechanism causes or fixes it; this satisfies the bare causal predicate (A3-C) while violating (A3-P). If that observer is informationally coupled to an access-complete decision frame, however, the violation also removes substantive singularising priority from the later act by \cref{prop:prepointing-collapse}. Conversely, an internal deterministic mechanism may fix the outcome but remain unavailable as an exact pre-act point in the declared operational regime, for example because its evaluation is not completed before the act; this can satisfy (A3-P) while violating (A3-C). The two predicates therefore remain causally distinct even though substantive singularisation requires both.
\end{proof}

\section{Countermodels and the role of each requirement}\label{sec:countermodels}

The following finite countermodels test the contribution of each requirement. They are diagnostic role tests, not a formal proof that every axiom is logically independent of the conjunction of all the others, and they do not exhaust every use of the expression ``free will''.

\begin{countermodel}[No plurality: deterministic automaton]
For each $q$, let $G(q)=\{s_q\}$. The act merely reveals or executes the unique coherent continuation. Conditions (A1), (A4), and a weak form of sourcehood may hold, but there is no alternative to singularise. This is deterministic agency, not substantive choice in the present sense.
\end{countermodel}

\begin{countermodel}[Plurality without coherence]
Let $G(q)$ be replaced by a list containing mutually incompatible or physically impossible descriptions. Numerical multiplicity alone then creates no genuine alternatives. Condition (A1) prevents this inflationary construction.
\end{countermodel}

\begin{countermodel}[Plurality already pointed]
Let $G(q)=\{s_0,s_1\}$ but include in $h(q)$ a bit $b$ and a rule selecting $s_b$. Both continuations remain members of the displayed set, yet one is already distinguished by the complete pre-act data. Formal plurality therefore does not entail openness.
\end{countermodel}

\begin{countermodel}[Open randomisation without sourcehood]
Let a causally independent fair coin select $s_0$ or $s_1$. The continuation is not fixed by the agent's prior reasons, but the event is attributable to the randomiser rather than to the agent. Randomness removes prediction without producing authorship.
\end{countermodel}

\begin{countermodel}[Sourcehood without openness]
Let a deterministic policy $f$ be constitutive of the agent and put $\alpha(q)=f(h(q))$. The action may express the agent's values and be free from external coercion. If $f$ is fixed by the complete prior state, however, the frame lacks (A3-C). This is the principal compatibilist countermodel.
\end{countermodel}

\begin{countermodel}[Computationally expensive determinism]
Let the same internal policy $f$ require super-polynomial time to evaluate. Practical or polynomial-time prediction may fail, but $f$ remains a causally licensed determination map in $\mathcal D$. The frame therefore still violates (A3-C). Computational expense alone does not convert deterministic unfolding into substantive openness.
\end{countermodel}

\begin{countermodel}[Non-causal infallible foreknowledge]
Suppose that the act is not causally fixed, but an observer exactly identifies $\alpha(q)$ before the act without intervening. If the observer is informationally coupled to the decision situation, actual access-completeness gives $\alpha(q)\in\mathcal P_{\rm act}(q)$: the frame is already pointed, so the later act lacks singularising priority. If the observer is wholly isolated, then $\alpha(q)\notin\mathcal P_{\rm act}(q)$ on that ground and the alleged foreknowledge does not violate (A3-P).
\end{countermodel}

\begin{countermodel}[No actualisation]
Let $G(q)$ remain non-singleton indefinitely and leave $\alpha$ undefined. The structure describes deliberative possibility but no completed act. Singularisation is required to pass from modal plurality to history.
\end{countermodel}

These examples reveal a dependency graph rather than a list of mutually exclusive properties. Plurality is a precondition for non-singularisation and singularisation. Singularisation must preserve coherence. Sourcehood determines how singularisation is interpreted. Causal openness and anticipatory non-pointing exclude different countermodels. None of these conditions by itself entails traceability or efficient verification.

\section{Compatibilism, individuation, and the sourcehood boundary}\label{sec:compatibilism}

The present section is not an autonomous philosophical excursus. Its function is to determine which prominent accounts retain the property required by the constitutive part of the argument: a source of the realised choice not wholly reducible to causally sufficient pre-act conditions. It does not yet establish polynomial non-anticipability or any complexity consequence.

\subsection{What compatibilism preserves}

Compatibilism is not refuted merely by repeating that a determined act was predictable. It is a family of views according to which freedom or moral responsibility can coexist with determinism, often by appealing to reasons-responsiveness, identification with motives, absence of certain coercive interventions, or ownership of the deliberative mechanism \cite{Frankfurt1969,FischerRavizza1998,SEPCompatibilism}. Frankfurt-style cases challenge the claim that alternative possibilities are necessary for moral responsibility \cite{Frankfurt1969}; Fischer and Ravizza's guidance control locates responsibility in the agent's appropriately reasons-responsive mechanism \cite{FischerRavizza1998}. List's criteria of intentional agency, alternative possibilities, and causal control extend a related functional approach to AI systems without making indeterminism or unpredictability constitutive \cite{List2025}. These accounts answer an important normative and practical question: when may conduct be attributed to a person or system for purposes of explanation and responsibility?

That achievement is substantial. A system which understands reasons, anticipates consequences, adjusts its conduct, and answers to norms can be an appropriate locus of praise, blame, prevention, and sanction. Nothing in the present paper shows such practices to be incoherent under determinism. The narrower question is what has thereby been preserved: functional control and attribution, or the stronger property of being the ultimate source of which admissible continuation becomes actual.

\begin{definition}[Functional attribution and ultimate sourcehood]\label{def:two-sourcehoods}
An act has \emph{functional attribution} when it issues through the reasons, dispositions, and control mechanisms assigned to a declared agent. It has \emph{ultimate sourcehood} in the present relative sense only when that attribution is conjoined with causal pre-act non-determination: the complete intrinsic pre-act conditions and governing mechanisms do not already suffice to select the realised continuation.
\end{definition}

The qualification ``in the present relative sense'' matters. The paper does not demand that an agent create itself or stand outside every causal condition. It asks whether, relative to the complete pre-act causal description declared by the model, the act contributes the selection rather than merely transmitting a selection already fixed there.

\subsection{The problem of ``the agent's own reasons''}

In a wholly deterministic universe, beliefs, preferences, character, and the mechanism which arbitrates among reasons may themselves result from prior states. An action may then issue authentically through the agent's psychology while remaining fixed by conditions of which that agent is not the ultimate source. The possessive in ``the agent's own reasons'' cannot settle this issue, because the relevant question is precisely what converts causal location or functional ownership into sourcehood.

Manipulation arguments make the gap vivid. An agent whose dispositions and deliberative architecture had been configured by an external manipulator might remain rational, reasons-responsive, and free from immediate coercion. The residual intuition that something is missing concerns provenance and sourcehood, not the mere interior location of the mechanism \cite{Pereboom2001}. Van Inwagen's consequence argument raises the related question whether, under determinism, consequences of the past and the laws can be within the agent's control \cite{VanInwagen1983}. Neither argument is treated here as a conclusive refutation of compatibilism. Together they show why an additional inference from internal control to ultimate sourcehood is required rather than automatic.

The distinction also blocks a tempting but invalid move. Computational expense cannot supply that missing inference. If an internal deterministic policy $f$ is causally sufficient for the act, then $f\in\mathcal D$ and (A3-C) fails whether $f$ runs in linear, polynomial, exponential, or unbounded practical time. An expensive determination is still a determination.

\subsection{The boundary of the agent}

Compatibilist attribution normally relies on a functional boundary: some mechanisms count as the agent's own, whereas coercive or manipulative causes are treated as external. That boundary is intelligible and often indispensable for explanation and law. It need not, however, be an ontologically primitive boundary.

Simondon's account reverses the order in which the individual and individuation are explained: the individual is a result of individuation rather than an unquestioned first term \cite{Simondon2005}. Foucault's archaeological analysis supplies a distinct epistemological caution: historically situated categories through which objects of knowledge are constituted should not automatically be treated as the ultimate articulations of reality \cite{Foucault1966}. Foucault does not provide a metaphysics of agency here, and Simondon does not decide the free-will dispute. Their joint methodological lesson is limited but relevant: the ordinary individual cannot simply be assumed to mark the point at which a determined causal chain changes its ontological nature.

The present definition is therefore \emph{boundary-neutral}, not invariant under arbitrary changes of boundary. It does not stipulate that the source must coincide with the ordinary human organism. It may in principle be located at an individual, distributed, collective, or transindividual scale. But every proposed locus defines a new substantive-agency frame, whose class $\mathcal D$ and attribution clause (A5) must be justified anew. Moving the boundary cannot by itself create causal non-determination.

This yields an asymmetry. Compatibilism needs a sufficiently stable agent boundary to distinguish ownership from external interference. Substantive agency first tests whether a declared source is reducible to sufficient prior determination; the social boundary used for imputation is then a further question. Criticising the ontological primacy of the individual therefore challenges a simple internal/external route to compatibilist ultimacy without, by itself, eliminating the substantive criterion.

\subsection{Practical responsibility and ultimate responsibility}

Practical responsibility identifies the appropriate locus for praise, blame, correction, deterrence, and answerability in a social order. Ultimate responsibility adds the stronger claim that the act expresses a source not wholly reducible to causally sufficient prior conditions. Compatibilism can provide a powerful account of the former while declining or failing to establish the latter. Strawson's emphasis on reactive attitudes illustrates how responsibility practices may be grounded without first resolving ultimate metaphysics \cite{Strawson1962}.

This distinction neither trivialises compatibilism nor turns existing legal practices into proof of substantive freedom. It identifies their evidential role correctly. The persistence and intelligibility of responsibility practices show the importance of agency and attribution; they do not deductively establish (A3-C). Conversely, denial of ultimate sourcehood does not logically force abandonment of every practical regime of responsibility.

\subsection{Progressive closure and the exact transmission boundary}

The principal alternatives can now be located by properties. Integral determinism supplies no ultimate sourcehood in the sense of \cref{def:two-sourcehoods}. Deterministic compatibilism preserves functional attribution and may preserve moral responsibility, but it satisfies the substantive target only by adding an account which secures causal pre-act non-determination. Pure indeterminacy avoids determination but, without (A5), supplies chance rather than agency. Libertarian theories differ over event-causal, agent-causal, and non-causal accounts of how openness and sourcehood combine \cite{Kane1996,OConnor2000,Clarke2003,Mele2006}. Hard incompatibilism rejects the required freedom without reducing that position to behavioural illusionism \cite{Pereboom2001}.

The resulting closure is therefore progressive rather than logically absolute. The paper does not prove that compatibilism and illusionism exhaust every conceivable philosophy. It establishes something narrower and more useful: among the positions considered, any view retaining real alternatives, ultimate sourcehood, and the act's priority in singularising one alternative must instantiate the substantive core; a refusal must abandon or reinterpret at least one of those properties. Available exact prescience supplies no intermediate substantive position: if it is inside the access-complete agency--information frame, it removes the act's singularising priority; if it is wholly outside that frame, it does not constitute a pre-act point for the model.

Most importantly, the philosophical result must not be converted into a computational conclusion by verbal substitution.

\begin{proposition}[Transmission boundary]\label{prop:transmission-boundary}
Non-reducibility of the realised choice to causally sufficient complete pre-act conditions, when represented in a substantive-agency frame, supplies condition (A3-C). Substantive singularising priority in an access-complete pre-act agency--information frame supplies condition (A3-P). These constitutive conclusions do not by themselves imply $\alpha\notin\mathrm{FP}$, Turing noncomputability, or certified actualisation.
\end{proposition}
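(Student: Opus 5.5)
The proposition has three parts, and I would prove them in order: two positive transmission claims and one negative non-implication claim.

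For the first claim, I would unfold \cref{def:causal-forcing}. Suppose the realised choice is not reducible to causally sufficient complete pre-act conditions under the declared interpretation $\mathcal C$. Then not every $\omega\in\Omega_{\mathcal C}(q)$ continues with $\alpha(q)$, so $h(q)\not\Vdash_{\mathcal C}\alpha(q)$ and hence $\alpha(q)\notin\mathcal D(q)$. This is exactly (A3-C), and because the hypothesis is stated occurrence by occurrence, it holds pointwise as the axiom requires.

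The second claim is a direct citation of \cref{prop:prepointing-collapse}. Substantive singularising priority in an actually access-complete frame gives $\alpha(q)\notin\mathcal P_{\rm act}(q)$ for each $q$, which is (A3-P).

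For the negative part, I would exhibit finite-to-countable frames satisfying (A1)--(A5), or at least (A3-C) and (A3-P), in which each further property fails separately. Since $\mathcal D$ and $\mathcal P_{\rm act}$ are fixed independently of the graph of $\alpha$ and are not defined by running time, I am free to choose the extensional data.
\begin{itemize}
\item \emph{Not $\alpha\notin\mathrm{FP}$.} Take $G(q)=\{s_0,s_1\}$ with $\alpha(q)=s_0$ for all $q$. Set $\Omega_{\mathcal C}(q)$ to contain evolutions through both continuations, so $\mathcal D(q)=\emptyset$, and set $\mathcal P_{\rm act}(q)=\emptyset$, with no operationally present point. Then (A3-C) and (A3-P) hold, yet $\alpha$ is constant and so in $\mathrm{FP}$. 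This is consistent because actual pointing concerns operational presence, not the mathematical existence of an algorithm, as \cref{def:information-frame} states explicitly.
\item \emph{Not Turing noncomputability.} The same constant map is computable.
\item \emph{Not certified actualisation.} Take an effective presentation in which each continuation has exponential encoding length, or in which no polynomially bounded verifiable trace with unique projection is specified. Since certification is an extra structural datum absent from \cref{def:saf}, the constitutive clauses hold while certification fails. Alternatively, take a frame with no effective encoding at all.
\end{itemize}

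The main obstacle is the first countermodel. One might object that a constant polynomial map should automatically count as a deployable point, which would place $\alpha(q)$ in $\mathcal P_{\rm act}(q)$. I would answer this by stressing that inserting such a device requires the separate passive-deployment bridge (effective-pointing closure) developed later. \cref{prop:transmission-boundary} asserts independence exactly from that bridge, so the countermodel shows the constitutive clauses alone do not yield the computational conclusions.
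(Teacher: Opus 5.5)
Your proof is correct. The two positive steps match the paper's. The first reads the pointwise content of $\mathcal D(q)$ through the forcing semantics. The second is a direct citation of \cref{prop:prepointing-collapse}. Where you differ is the negative clause.

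The paper builds no models there. It argues that (A3-P) concerns only points actually present, so a merely existing polynomial algorithm is not thereby in $\mathcal P_{\rm act}(q)$, and excluding FP needs the separate modal closure. It then invokes \cref{thm:realisation-gap} to show that computability outside FP still does not yield a balanced, polynomially verifiable, uniquely projected trace relation.

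Your constant-selection frame with $\mathcal D(q)=\mathcal P_{\rm act}(q)=\varnothing$ is a legitimate and fully explicit witness for the first two non-implications. It also makes the noncomputability case explicit, which the paper leaves implicit. It works because \cref{def:information-frame} refuses to count the mathematical existence of an algorithm as operational presence, so it is a concrete instance of the paper's remark that merely existing algorithms are not actually available.

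For certification, your witnesses (no effective encoding, or exponentially long continuations violating (O3)) are valid but trivial failures. The paper's route establishes the sharper fact: certification still fails to follow after effective presentation, computability, and $\alpha\notin\mathrm{FP}$ are all granted. That sharper fact is what keeps the later $\LAeff/\LAnc$ bifurcation from silently containing the FNP hypothesis. Transferring the Boolean $f$ of that theorem into a frame uses exactly the freedom to fix $\mathcal D$ and $\mathcal P_{\rm act}$ independently of $\alpha$ that you invoke.

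One caution: your option ``no trace is specified'' is not by itself a countermodel if (O1)--(O6) are read existentially. Your constant frame in fact admits the trivial certificate $V(q,s,\tau)=1\iff s=s_0$ (with empty $\tau$), so it satisfies (O3)--(O5). Rest that part on the encoding or balance failures, or on \cref{thm:realisation-gap}.
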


\begin{proof}
The first statement is the pointwise causal content assigned to $\mathcal D(q)$. The second is \cref{prop:prepointing-collapse}: an actually available exact point would already point the closed pre-act object. This does not classify merely existing algorithms as actually available. Effective-pointing closure therefore introduces $\mathcal P_{\rm eff}(q)$ and the additional modal robustness needed to exclude FP. Finally, \cref{thm:realisation-gap} shows that computability outside FP does not guarantee a polynomially balanced, polynomial-time verifiable, uniquely projected trace relation.
\end{proof}

The terminological result can now be stated without hiding those bridges:
\begin{proposition}[Compatibilist boundary]\label{prop:compatibilist}
A determined but reasons-responsive act may satisfy a compatibilist conception of freedom and practical responsibility. It cannot satisfy substantive agency as defined in \cref{def:saf} if the complete intrinsic pre-act state and governing mechanisms already causally license a selector for the realised continuation.
\end{proposition}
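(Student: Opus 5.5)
The plan is to prove the two halves of \cref{prop:compatibilist} separately. The second half carries the content; the first is a consistency claim that a single model can witness.

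For the first half, I would build a concrete substantive-agency-style tuple from the countermodel ``Sourcehood without openness''. Take $G(q)=\{s_0,s_1\}$, so (A1) and (A2) hold. Fix a deterministic internal policy $f$ that is constitutive of the declared agent and responsive to the reasons encoded in $h(q)$, and set $\alpha(q)=f(h(q))$. The act then issues through the agent's own reasons, dispositions and control mechanism, which is functional attribution in the sense of \cref{def:two-sourcehoods}. This is exactly what the cited compatibilist accounts (guidance control, reasons-responsiveness, absence of coercion) require. No contradiction arises, because those accounts impose no clause corresponding to (A3-C). So the first sentence holds by exhibition of a model and does not need a derivation.

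For the second half, I would argue by contradiction, using the causal forcing semantics of \cref{def:causal-forcing}. Suppose the complete intrinsic pre-act state and governing mechanisms causally license a selector $f$ with $f(h(q))=\alpha(q)$. Here ``causally license'' means that these mechanisms belong to the declared interpretation $\mathcal C$ and that every admissible evolution $\omega\in\Omega_{\mathcal C}(q)$ continues with $f(h(q))$. Then $h(q)\Vdash_{\mathcal C}\alpha(q)$, so $\alpha(q)\in\mathcal D(q)$, which contradicts (A3-C) at that occurrence. Because (A3-C) is imposed pointwise, it is enough for the licensing to hold at a single $q$. This also covers the loophole in which different mechanisms fix different occurrences. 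Computational cost plays no role: by the ``Computationally expensive determinism'' countermodel, $f$ may be super-polynomial and the inclusion $\alpha(q)\in\mathcal D(q)$ still holds.

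The main obstacle is the gap between ``causally license a selector'' and membership in $\mathcal D(q)$. A computable regularity in the completed graph of $\alpha$ is not enough, as the paper itself notes. So I would make the hypothesis explicit: the selector's mechanism must be included in the declared interpretation $\mathcal C$. I would then invoke the adequacy conditions on $\mathcal C$, namely completeness relative to the causal theory in use and fixation independent of $\alpha$. These block the reply that the determining mechanism could simply be left out of $\Omega_{\mathcal C}(q)$. With that explicit, the inclusion in $\mathcal D(q)$ is immediate. I would close by recording, in line with \cref{prop:transmission-boundary}, that the argument uses only (A3-C) and draws no conclusion about FP.
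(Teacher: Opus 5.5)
Your proposal is correct and matches the paper's reasoning. The paper gives no separate proof environment for this proposition; it relies on the ``Sourcehood without openness'' countermodel and the cited compatibilist accounts for the first sentence, and on the earlier remark that a causally sufficient internal policy $f$ places $\alpha(q)$ in $\mathcal D(q)$ under \cref{def:causal-forcing}, so (A3-C) fails pointwise whatever $f$'s running time. Your explicit reading of ``causally license'' as inclusion of the mechanism in the declared interpretation $\mathcal C$, backed by the adequacy conditions on $\mathcal C$, makes precise what the paper leaves informal.
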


This proposition does not settle which conception deserves the unqualified name ``free will''. It forces the disagreement to be stated at the level of properties rather than hidden in terminology. The later route to $\LAc$ remains exactly the formal route proved in \cref{sec:prediction,sec:objectification,sec:interfaces}: effective-pointing closure excludes FP, (O1)--(O6) provide the certified search interface, and (P1)--(P2) govern its historical interpretation.

\section{Choice-frame and sheaf-theoretic local-to-global representation}\label{sec:representation}

\subsection{Abstract representation}

For each occurrence $q$, define the unpointed pre-act choice object
\[
 C_q=(h(q),G(q)).
\]
After the act, the corresponding pointed object is
\[
 C_q^+=(h(q),G(q),\alpha(q)).
\]

\begin{proposition}[Choice-frame representation lemma]\label{thm:representation}
Every substantive agency frame determines a family of non-singleton causally undetermined and anticipatorily unpointed choice objects before the act and pointed choice objects after the act. Conversely, a family $(C_q,C_q^+)$ equipped with causal-determination and pre-act-pointing classes and satisfying coherence, non-singletonness, causal non-determination, anticipatory non-pointing, and endogenous pointing determines a substantive agency frame.
\end{proposition}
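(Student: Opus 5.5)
The proof is a direct translation in both directions between the axioms of \cref{def:saf} and properties of the choice objects $C_q=(h(q),G(q))$ and $C_q^+=(h(q),G(q),\alpha(q))$. No new mathematics is needed. The work is bookkeeping: each axiom has to be matched with exactly one listed property, and the converse must not smuggle in the completed graph of $\alpha$ when the classes $\mathcal D$ and $\mathcal P_{\rm act}$ are fixed.

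\emph{Forward direction.} Fix a substantive agency frame $\mathfrak A=(Q,H,h,G,\alpha,\mathcal D,\mathcal P_{\rm act})$ and define $C_q$ and $C_q^+$ as displayed above. I check each property in turn.
\begin{enumerate}[label=(\roman*)]
 \item By (A1), $G(q)$ is exactly the set of continuations coherent with $h(q)$, so $C_q$ is a coherent choice object.
 \item By (A2), $|G(q)|\ge 2$, so $C_q$ is non-singleton.
 \item By (A4), $\alpha(q)\in G(q)$, so $C_q^+$ is a well-formed pointed object over $C_q$.
 \item (A3-C) and (A3-P) say pointwise that $\alpha(q)\notin\mathcal D(q)$ and $\alpha(q)\notin\mathcal P_{\rm act}(q)$. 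So the point later carried by $C_q^+$ lies in neither pre-act class. This is what "causally undetermined and anticipatorily unpointed" means for $C_q$.
 \item (A5) supplies endogenous attribution of the passage $C_q\mapsto C_q^+$.
\end{enumerate}
The family $(C_q,C_q^+)_{q\in Q}$, equipped with $\mathcal D$ and $\mathcal P_{\rm act}$, therefore has all the stated properties.

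\emph{Converse.} Start from a family $(C_q,C_q^+)_{q\in Q}$ with $C_q=(h_q,G_q)$ and $C_q^+=(h_q,G_q,s_q)$, together with classes $\mathcal D(q)$ and $\mathcal P(q)\subseteq G_q$. Set $H=\{h_q\}$, $h(q)=h_q$, $G(q)=G_q$, $\alpha(q)=s_q$, $\mathcal D=\mathcal D$ and $\mathcal P_{\rm act}=\mathcal P$. Each hypothesis then returns one axiom: coherence gives (A1), non-singletonness gives (A2), causal non-determination gives (A3-C), anticipatory non-pointing gives (A3-P), well-formedness of $C_q^+$ gives (A4), and endogenous pointing gives (A5). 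Composing this construction with the forward one returns the original data, so the correspondence is bijective on the declared components.

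\emph{Main obstacle.} The hard step is the side condition in \cref{def:saf}. The frame requires $\mathcal D(q)$ to be induced by a causal interpretation as in \cref{def:causal-forcing}, and requires both classes to be fixed independently of the completed graph of $\alpha$.

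For $\mathcal D$, I would read "equipped with causal-determination classes" in the converse as meaning classes already given by a declared interpretation $\mathcal C$ with admissible evolutions $\Omega_{\mathcal C}(q)$. If instead one starts from an arbitrary class of size at most one, I would build a canonical interpretation. For $\mathcal D(q)=\{s\}$, take $\Omega(q)$ to consist of evolutions all continuing with $s$. For $\mathcal D(q)=\emptyset$, take $\Omega(q)$ to contain evolutions continuing with two distinct members of $G(q)$, which exist by non-singletonness. Then $\mathcal D$ is exactly the forcing set of \cref{def:causal-forcing}.

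Independence from $\alpha$ cannot be proved from finite data. It is an admissibility requirement, so I would state it explicitly as part of the hypothesis "equipped with" and transfer it unchanged in both directions. Any remark on the adequacy of $\mathcal C$ should be recorded as external to the lemma, as the paper already does for A5.
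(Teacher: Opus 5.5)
Your proposal is correct and follows the paper's own proof: an axiom-by-axiom translation in both directions, with the converse simply ``reading these data back occurrence by occurrence''. Your explicit handling of how $\mathcal D$ is induced by a causal interpretation, and of independence from $\alpha$, is more careful than the paper. One minor overclaim, not needed for the proposition: the round trip frame $\to$ family $\to$ frame replaces $H$ by the image $\{h(q):q\in Q\}$, so the correspondence is bijective only up to that restriction.
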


\begin{proof}
From a frame $\mathfrak A$, conditions (A1) and (A2) define $C_q$ as a coherent non-singleton possibility object. Condition (A3-C) gives $\alpha(q)\notin\mathcal D(q)$, while (A3-P) gives $\alpha(q)\notin\mathcal P_{\rm act}(q)$. Conditions (A4) and (A5) provide the endogenous post-act point and hence $C_q^+$. The converse simply reads these data back occurrence by occurrence.
\end{proof}

The lemma is intentionally elementary and is neither a classification theorem nor a complexity lower bound. The sheaf-theoretic refinement below formalises local compatibility and global continuation; no later lower bound depends on sheaf theory alone.

\subsection{Local-to-global realisation}

When a history is composed of overlapping local constraints, let $\mathcal U_q=\{U_i\}$ be a cover and $F_q$ a finite presheaf of locally admissible assignments. Compatible local sections glue to global continuations
\[
 \Gamma(F_q)=\{s:\text{$s$ satisfies every local and overlap constraint}\}.
\]
Set
\[
 \Gamma(F_q)=G(q).
\]
The actualisation operation is a distinct map
\[
 \alpha(q)\in G(q).
\]
Thus gluing answers ``which continuations are coherent?'', whereas singularisation answers ``which coherent continuation became actual?'' The companion-paper notation $\GLUE(x)$ and $\SELECT(q)$ is introduced at the computational interface, where an occurrence is written $q=(x,e)$. The distinction is structural and should not be mistaken for a lower bound.

This is a genuinely sheaf-theoretic use of the local-to-global method and belongs naturally to the topos-theoretic tradition in which compatible local data are organised by presheaves, sheaves, and their global sections \cite{MacLaneMoerdijk1992}. Its positive role is to separate $\GLUE$, which constructs the space of coherent global continuations, from $\SELECT$, which points to the continuation that becomes actual. The argument needs neither the full internal logic nor the general machinery of an ambient topos. In particular, the complexity conclusion does not follow from sheaf theory alone: it depends on the later modal and certified-actualisation bridges.

\subsection{Symmetry and intrinsic non-pointing}

Let
\[
 \mathcal G_q=\Aut(F_q,h(q))
\]
be the automorphisms preserving the intrinsic pre-act data. They act on $\Gamma(F_q)$. A strong form of structural non-singularisation is
\begin{equation}\label{eq:no-fixed-point}
 \Gamma(F_q)^{\mathcal G_q}=\varnothing.
\end{equation}

Under \eqref{eq:no-fixed-point}, no equivariant map from the one-point trivial $\mathcal G_q$-set to $\Gamma(F_q)$ exists. This excludes an intrinsic symmetry-preserving selector, but not every externally labelled or representation-dependent algorithm. That limitation becomes central in the next section.

\section{Prediction, selection, and the bridge principle}\label{sec:prediction}

\begin{definition}[Standard exact pre-act algorithm]\label{def:predictor}
Fix a uniform finite encoding of the complete declared pre-act input $q$. A standard exact pre-act algorithm is a uniform deterministic algorithm $M$ such that
\[
 M(q)=\alpha(q)
\]
on the entire occurrence family, using neither a post-act trace nor non-uniform advice encoding future outcomes. It is polynomial when its running time and output length are polynomial in $|q|$.
\end{definition}

This is the ordinary algorithmic notion needed for FP. It is deliberately not filtered by an additional philosophical extensionality test: a polynomial program on the fixed encoding belongs to the class even if it exploits canonical labels. Physical implementations may separately be required to be passive and non-intervening, but such requirements cannot remove a standard algorithm from FP.

At this stage, $\alpha\in\mathrm{FP}$ on the occurrence family means that one total uniform polynomial-time machine returns $\alpha(q)$ for every valid occurrence encoding $q$, with the polynomial bound measured in $|q|$. Once a polynomially decidable valid language $I$ is fixed in \cref{def:effective-branches}, the canonical value $\bot$ on invalid strings turns this family-relative computation into an ordinary total function on $\{0,1\}^*$.

\begin{lemma}[Algorithm-to-selector]\label{lem:predictor-selector}
Every standard exact pre-act algorithm induces a uniform selector of the realised continuation on the fixed encoding.
\end{lemma}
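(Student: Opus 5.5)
The plan is to read \cref{lem:predictor-selector} as essentially definitional, and to unfold \cref{def:predictor} against the notion of a selector used in \cref{sec:representation}. A \emph{uniform selector of the realised continuation on the fixed encoding} should be taken to mean a single uniform procedure $S$ on encoded occurrences such that $S(q)\in G(q)$ and $S(q)=\alpha(q)$ for every $q$ in the occurrence family. Given a standard exact pre-act algorithm $M$, I will set $S:=M$ and check three things in order.

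The first step is well-definedness and uniformity. $M$ is a single deterministic machine acting on the fixed uniform finite encoding of the complete pre-act input. So $S$ is one procedure for the whole family, and not a family of occurrence-wise rules. This matters because the pointwise formulation of (A3-C) and (A3-P) explicitly distinguishes occurrence-wise fixing from a global map.

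The second step is the selection property. By \cref{def:predictor}, $M(q)=\alpha(q)$ for all $q$. By (A4), $\alpha(q)\in G(q)$. Hence $S(q)\in G(q)$ and $S$ picks out exactly the historical continuation, so $S$ is a selector of $\alpha$ in the sense of $\SELECT$, and not merely of $\GLUE$.

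The third step is the admissibility of inputs. The definition forbids post-act traces and non-uniform advice encoding future outcomes. So $S$ uses only the pre-act history encoded in $q$, which is what makes it a \emph{pre-act} selector rather than a retrospective reader of $\alpha$.

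Complexity is simply inherited. If $M$ is polynomial in running time and output length, then so is $S$. Once the valid language $I$ and the value $\bot$ on invalid strings are fixed, $S$ witnesses $\alpha\in\mathrm{FP}$ in the family-relative sense stated after \cref{def:predictor}.

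I expect no real obstacle. The only care needed is to avoid overclaiming. The lemma must not assert that $S$ places $\alpha(q)$ in $\mathcal D(q)$ or in $\mathcal P_{\rm act}(q)$. A merely existing algorithm is neither causal provenance nor an operationally present point (\cref{prop:transmission-boundary}). The proof should therefore remark that the induced selector is purely extensional and computational. The passage from it to pre-act pointing is reserved for the later effective-pointing/deployment bridge.
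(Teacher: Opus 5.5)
Your proposal is correct and matches the paper's proof: the paper likewise defines $\sigma_M(q)=M(q)$, uses exactness to obtain $\sigma_M(q)=\alpha(q)\in G(q)$, and invokes uniformity and the pre-act input restriction to make $\sigma_M$ a pre-act rule rather than retrospective information. It closes with the same caveat that no causal conclusion follows; your remark on inherited polynomial bounds, and your refusal to place $\alpha(q)$ in $\mathcal D(q)$ or $\mathcal P_{\rm act}(q)$, are consistent with the surrounding text.
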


\begin{proof}
Define $\sigma_M(q)=M(q)$. Exactness gives $\sigma_M(q)=\alpha(q)\in G(q)$. Uniformity and the pre-act input restriction make $\sigma_M$ a pre-act rule rather than retrospective information. No causal conclusion follows from this lemma alone.
\end{proof}

The computational bridge is now factored into two premises whose content is independent of the conclusion $\alpha\notin\mathrm{FP}$.

\begin{definition}[Admissible passive extension]\label{def:passive-extension}
Given a frame $\mathfrak A$ and a uniform pre-act procedure $M$, an extension $\mathfrak A[M]$ is \emph{admissibly passive} when: (i) it preserves $q$, $h(q)$, $G(q)$, $\mathcal D(q)$, and the realised continuation $\alpha(q)$; (ii) $M$ receives only the complete pre-act encoding; (iii) its output is recorded before the act inside the enlarged agency--information frame but is not fed into the deciding mechanism; and (iv) it uses no post-act data, future-outcome table, oracle, or non-uniform advice. Thus the extension adds an epistemic point without adding a causal arrow into the act.
\end{definition}

\begin{definition}[Passive-extension invariance]\label{def:pei}
A substantive-agency family has \emph{passive-extension invariance} when every admissibly passive extension preserves the substantive status and singularising priority of each act. In other words, a causally inert read-only enlargement cannot by itself turn a substantively free act into a non-substantive one.
\end{definition}

\begin{definition}[Effective deployment principle]\label{def:edp}
Every standard polynomial-time procedure on the fixed complete encoding admits an implementation satisfying \cref{def:passive-extension}. This is an in-principle counterfactual implementation claim; it does not assert that the procedure is known or installed in the actual history.
\end{definition}

These premises expose the substantive dispute. Passive-extension invariance is motivated by causal irrelevance: if a shielded read-only computation neither changes the complete input nor feeds back into the agent, the freedom of the act should not depend on whether the computation is present. Yet singularising priority is informational as well as causal: an exact point inside the enlarged frame removes the act's status as the first pointing event. The incompatibility below makes that tension explicit rather than concealing it. The effective deployment principle then connects the standard extensional meaning of FP to a counterfactual physical implementation. It is stronger than polynomial computability alone: it abstracts from finite pre-act duration and other physical resource limits and requires preservation of the occurrence under installation. A reflexive theory may deny that an exact forecast can ever be added without changing the occurrence; a resource-sensitive theory may deny timely passive deployment; and a strongly epistemic theory may reject passive-extension invariance. None of these premises is equivalent by definition to $\alpha\notin\mathrm{FP}$.

\begin{theorem}[Modal incompatibility]\label{thm:modal-incompatibility}
The following three claims cannot hold jointly on the same occurrence family:
\begin{enumerate}[label=(\roman*)]
 \item the acts have substantive singularising priority;
 \item passive-extension invariance holds;
 \item an exact pre-act selector admits an admissibly passive extension.
\end{enumerate}
\end{theorem}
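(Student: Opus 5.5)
The proof will go by contradiction. Suppose (i)--(iii) hold on one occurrence family. By (iii) there is a uniform exact pre-act algorithm $M$ together with an admissibly passive extension $\mathfrak A[M]$ in the sense of \cref{def:passive-extension}. By \cref{lem:predictor-selector}, $M$ induces a selector $\sigma_M(q)=M(q)=\alpha(q)\in G(q)$.

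Clause (iii) of \cref{def:passive-extension} says that this output is recorded before the act inside the enlarged agency--information frame. So in $\mathfrak A[M]$ the value $\alpha(q)$ is an exact point operationally present in the declared system before the act. Hence $\alpha(q)\in\mathcal P_{\rm act}^{\mathfrak A[M]}(q)$ for every $q$, once the enlarged frame is read as actually access-complete in the sense of \cref{def:information-frame}. Clause (i) of the same definition guarantees that $q$, $h(q)$, $G(q)$ and $\alpha(q)$ are unchanged. The point recorded in the extension is therefore the very continuation the act later realises, and not some other member of $G(q)$.

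Next I invoke (ii). Passive-extension invariance (\cref{def:pei}) transfers the singularising priority assumed in (i) from $\mathfrak A$ to $\mathfrak A[M]$. Applying \cref{prop:prepointing-collapse} to the access-complete frame $\mathfrak A[M]$ then gives $\alpha(q)\notin\mathcal P_{\rm act}^{\mathfrak A[M]}(q)$. This contradicts the previous step. Equivalently, by \cref{def:singularising-priority}, the act would merely realise a continuation already pointed in the frame, so it cannot be the transition that first points it. Since the family is nonempty, the contradiction at any single $q$ suffices, which gives joint inconsistency.

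The delicate step is the access-completeness of the extended frame. \Cref{def:information-frame} defines it only for the declared system, and the passive device is added afterwards. I would argue that the enlarged frame declared by the extension is the relevant closed system. The device's recorded output is operationally present there by construction, and so it must be counted in $\mathcal P_{\rm act}$ of that frame. If instead access-completeness must be imposed explicitly, I would state it as part of what ``inside the enlarged agency--information frame'' means in \cref{def:passive-extension}(iii). I would also note that no causal premise is used: $\mathcal D$ is preserved, so the conflict is purely informational, as the paper intends.
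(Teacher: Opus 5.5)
Your proof is correct and follows the paper's own argument. Clause (iii) of \cref{def:passive-extension} places $M(q)=\alpha(q)$ as an exact point in the enlarged pre-act frame, \cref{prop:prepointing-collapse} then denies the act singularising priority there while passive-extension invariance asserts it, and the contradiction follows. The paper's appeal to \cref{prop:prepointing-collapse} relies on the same access-completeness of the enlarged frame that you flag, and your direct step through \cref{def:singularising-priority} avoids that assumption altogether: an act cannot be the first to point a continuation already recorded as an exact point of the enlarged frame.
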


\begin{proof}
Suppose an exact procedure $M$ had such an extension. By exactness and clause (iii) of \cref{def:passive-extension}, the enlarged pre-act frame contains the point $M(q)=\alpha(q)$ before every act. The pre-pointing explication, \cref{prop:prepointing-collapse}, therefore says that the act in $\mathfrak A[M]$ cannot have singularising priority. Passive-extension invariance says that it retains precisely that priority because the extension is causally inert. Contradiction.
\end{proof}

\begin{corollary}[Passive-extension dilemma]\label{cor:passive-dilemma}
If an exact pre-act selector admits an admissibly passive extension, then an account of the occurrence must relinquish at least one of the following: substantive singularising priority or invariance of that priority under causally inert informational extension.
\end{corollary}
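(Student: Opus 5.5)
The corollary is a contrapositive repackaging of \cref{thm:modal-incompatibility}, so I would prove it by pure propositional reasoning on that theorem rather than by re-entering the pre-pointing argument. Fix an occurrence family and assume the hypothesis of the corollary: some exact pre-act selector $M$ admits an admissibly passive extension $\mathfrak A[M]$ in the sense of \cref{def:passive-extension}. This is exactly clause (iii) of \cref{thm:modal-incompatibility}. The theorem says that (i), (ii), and (iii) cannot hold together on the same family. With (iii) granted, the conjunction of (i) and (ii) must fail. By De Morgan, at least one of (i) substantive singularising priority or (ii) passive-extension invariance (\cref{def:pei}) fails. That is the asserted disjunction of what an account must relinquish.

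The one point I would check explicitly is that ``the same occurrence family'' is respected. The hypothesis of the corollary concerns the occurrence under consideration, and the theorem's contradiction is derived occurrence by occurrence. Its proof invokes \cref{prop:prepointing-collapse} for the enlarged frame at each $q$. So I would read the corollary pointwise, or family-wise, and note that clause (i) of \cref{def:passive-extension} preserves $q$, $h(q)$, $G(q)$, $\mathcal D(q)$, and $\alpha(q)$. This guarantees that the account of the occurrence in $\mathfrak A$ and in $\mathfrak A[M]$ concerns the same act. Because of that, relinquishing priority ``in the extension'' and relinquishing invariance ``between frame and extension'' are well-defined alternatives about one occurrence.

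I do not expect a genuine obstacle. The only delicate step is making sure that ``relinquish'' is read as the negation of the relevant claim, not as something weaker, so that the step is literally modus tollens on \cref{thm:modal-incompatibility}. For completeness I would remark that both horns are live. One can keep invariance and drop priority, which is the strongly causal reading. Or one can keep priority and deny invariance, which is the epistemic reading highlighted in the discussion after \cref{def:edp}. This remark shows the disjunction cannot be strengthened to a specific horn, but it is not needed for the proof itself.
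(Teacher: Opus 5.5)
Your proposal is correct and matches the paper's proof: the paper also takes the contrapositive of \cref{thm:modal-incompatibility} with clause (iii) held fixed, so the conjunction of (i) and (ii) fails. Your extra checks on the occurrence-by-occurrence reading and on clause (i) of \cref{def:passive-extension} are harmless elaborations, not steps the argument needs.
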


\begin{proof}
This is the contrapositive form of \cref{thm:modal-incompatibility} with the third claim fixed.
\end{proof}

The two branches have distinct content. Relinquishing singularising priority means that the exact point was present before the act, so the act realises rather than first differentiates the continuation. Relinquishing passive-extension invariance preserves first-pointing only by making the substantive status of the act sensitive to the presence of a read-only informational enlargement which, by stipulation, changes neither the causal history nor the realised continuation. The theorem does not decide that philosophical choice by definition; it establishes that exact passive anticipation carries one of these two costs. Effective-pointing closure selects the invariant branch and thereby converts this modal dilemma into an exclusion of efficient exact selectors.

\begin{definition}[Effective-pointing closure]\label{def:acp}
For brevity, \emph{effective-pointing closure} denotes the conjunction of passive-extension invariance and the effective deployment principle.
\end{definition}

\begin{corollary}[Efficient-selector exclusion]\label{thm:predictive-collapse}
If a substantive agency frame satisfies effective-pointing closure on a complete uniform encoding, then
\[
 \alpha\notin\mathrm{FP}.
\]
\end{corollary}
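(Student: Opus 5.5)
The plan is a proof by contradiction that reduces the corollary to \cref{thm:modal-incompatibility}. Suppose $\alpha\in\mathrm{FP}$ on the fixed complete uniform encoding. Then there is a total uniform deterministic machine $M$, polynomial in $|q|$ in running time and output length, with $M(q)=\alpha(q)$ for every valid occurrence $q$. This $M$ uses no post-act trace and no advice. It is therefore a polynomial standard exact pre-act algorithm in the sense of \cref{def:predictor}. By \cref{lem:predictor-selector}, $\sigma_M=M$ is a uniform exact pre-act selector of the realised continuation.

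Second, I invoke the effective deployment half of effective-pointing closure (\cref{def:edp}). Since $M$ is a standard polynomial-time procedure on the fixed complete encoding, it admits an implementation $\mathfrak A[M]$ satisfying all four clauses of \cref{def:passive-extension}. Clause (i) keeps $q$, $h(q)$, $G(q)$, $\mathcal D(q)$ and $\alpha(q)$ fixed. Clauses (ii)--(iv) make the output a recorded, non-fed-back, advice-free pre-act point. Hence claim (iii) of \cref{thm:modal-incompatibility} holds for this occurrence family.

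Third, the frame is a substantive agency frame, which I read as carrying substantive singularising priority for its acts in an actually access-complete frame. This is the reading under which (A3-P) is licensed by \cref{prop:prepointing-collapse}, and it gives claim (i). The passive-extension invariance half of effective-pointing closure gives claim (ii). All three claims of \cref{thm:modal-incompatibility} then hold jointly on the same family, which is a contradiction. So $\alpha\notin\mathrm{FP}$. Concretely, the contradiction arises because $\mathfrak A[M]$ contains $M(q)=\alpha(q)$ before each act, so that $\alpha(q)$ lies in the enlarged frame's pointed class. \Cref{prop:prepointing-collapse} then denies priority, while invariance asserts it.

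The main obstacle is the third step, namely securing claim (i) from the hypothesis ``substantive agency frame''. \Cref{def:saf} lists (A3-P) rather than singularising priority itself. I would first argue that singularising priority is the intended constitutive content behind (A3-P) and (A4), as the remark after \cref{def:saf} states. If that is contested, I would explicitly add singularising priority to the hypotheses. A secondary check is that the enlarged frame $\mathfrak A[M]$ is still access-complete, so that \cref{prop:prepointing-collapse} applies there. Clause (iii) of \cref{def:passive-extension} places the recorded output inside the enlarged agency--information frame, which I would cite for this.
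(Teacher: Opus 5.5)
Your proposal is correct and follows the paper's own proof: assume $\alpha\in\mathrm{FP}$, treat the polynomial machine as an exact pre-act selector, use effective deployment to get an admissibly passive extension and invariance to retain priority, and contradict \cref{thm:modal-incompatibility}. The issue you flag in your third step is real but shared by the paper. Its proof likewise reads ``substantive agency frame'' as supplying singularising priority, which is licensed only by the remark that (A3-P) and (A4) jointly formalise substantive singularisation, so stating that hypothesis explicitly is the right move.
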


\begin{proof}
Suppose $\alpha\in\mathrm{FP}$. A standard exact polynomial-time algorithm $M$ then computes $\alpha(q)$ on the whole family. The effective deployment principle gives an admissibly passive extension for $M$, while passive-extension invariance preserves substantive singularising priority. This contradicts \cref{thm:modal-incompatibility}.
\end{proof}

The theorem and corollary separate two achievements. The theorem is an incompatibility result about singularising priority, causal invariance, and informational pre-pointing. The corollary excludes FP only after the additional deployment principle connects an abstract polynomial procedure to an admissibly passive pre-act implementation. The argument is conditional, but its conditions do not restate the computational conclusion. The polynomial threshold enters only through the deployment principle, not through the concept of causation or actual substantive agency.

\begin{proposition}[Robustness under effective representation]\label{prop:representation-robustness}
Let two encodings of the same occurrence family have polynomial-time intertranslations of valid inputs and continuations with polynomially related lengths, preserving the admissible sets and the uniquely projected certified actualisation as in the effective-equivalence definition of \cite{Clech2026a}. Then $\alpha\in\mathrm{FP}$ under one encoding if and only if $\alpha\in\mathrm{FP}$ under the other.
\end{proposition}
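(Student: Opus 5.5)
The argument is the standard closure of FP under polynomial-time reductions, applied in both directions. I would fix two encodings $E_1,E_2$ of the same occurrence family and write $\alpha_1,\alpha_2$ for the actualisation map read through each. The effective-equivalence hypothesis supplies four polynomial-time maps:
\begin{itemize}
\item $\tau_{12}$, sending valid $E_1$-inputs $q$ to valid $E_2$-inputs representing the same occurrence;
\item a continuation translation $\kappa_{21}$, sending $E_2$-codes of continuations to $E_1$-codes;
\item their converses $\tau_{21}$ and $\kappa_{12}$.
\end{itemize}
Input lengths are polynomially related, so $|\tau_{12}(q)|\le p(|q|)$ for some polynomial $p$. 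The maps preserve admissible sets, so $\kappa_{21}$ sends $G_2(\tau_{12}(q))$ bijectively onto $G_1(q)$. They also preserve the uniquely projected certified actualisation. The key consequence to extract is the commutation identity
\[
 \alpha_1(q)=\kappa_{21}\bigl(\alpha_2(\tau_{12}(q))\bigr)
\]
for every valid $E_1$-input $q$, together with the symmetric identity.

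Suppose $\alpha_2\in\mathrm{FP}$ via a machine $M_2$ with running time bounded by a polynomial $r$. I would define $M_1(q)=\kappa_{21}(M_2(\tau_{12}(q)))$. Computing $\tau_{12}(q)$ costs polynomial time in $|q|$. Running $M_2$ on it costs at most $r(p(|q|))$ and produces an output of at most that length. Applying $\kappa_{21}$ costs polynomial time in that output length. A composition of polynomials is a polynomial, so $M_1$ is uniform and polynomial. The commutation identity makes $M_1$ exact on the whole family, so $\alpha_1\in\mathrm{FP}$. The converse direction is identical with the indices swapped. If the valid language $I$ of \cref{def:effective-branches} is used to make these functions total on $\{0,1\}^*$, $M_1$ first decides membership of $q$ in $I_1$ in polynomial time and outputs $\bot$ on invalid strings, so totality is preserved.

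The main obstacle is the commutation identity itself. Preserving the admissible sets alone does not make the translations respect which continuation was actualised. I would get this from the clause preserving the uniquely projected certified actualisation: the two encodings represent the same occurrence, so the same historical continuation is represented in both. Unique projection then forces the translated $E_2$-actualisation to coincide with the $E_1$-actualisation. If the effective-equivalence definition of \cite{Clech2026a} states this only at the level of certified traces, I would add one step. First translate the trace, then project it with the polynomial-time projection, and use uniqueness to identify the resulting continuation with $\alpha_1(q)$.
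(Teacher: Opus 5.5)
Your proposal is correct and is the paper's own argument: compose a polynomial-time machine in one encoding with the polynomial-time input translation and the inverse continuation translation, and use the polynomial length relations to keep the composed running time polynomial. The paper's two-line proof leaves two points implicit, and you make both explicit. First, the commutation identity $\alpha_1(q)=\kappa_{21}(\alpha_2(\tau_{12}(q)))$ follows from preservation of the uniquely projected certified actualisation. Second, invalid strings are handled by deciding membership in $I$ and outputting $\bot$.
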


\begin{proof}
Compose a polynomial-time algorithm in either representation with the polynomial-time input translation and the inverse output translation. Polynomial length relations preserve a polynomial running-time bound.
\end{proof}

Representation robustness is therefore proved after fixing the standard algorithmic class; it is not used to discard representation-dependent polynomial algorithms from that class.

\section{Certified realisation as a standard search relation}\label{sec:objectification}

Two effective notions must be separated. A \emph{uniform finite presentation} encodes occurrences and continuations and thereby makes computability questions meaningful. A \emph{certified realisation} additionally gives short, uniformly verifiable post-act evidence. The first is enough for the exhaustive computable/noncomputable bifurcation. The second is needed for FNP.

\begin{definition}[Search classes]\label{def:search-classes}
A polynomially balanced relation $R(x,w)$ belongs to $\FNPS$ when membership is decidable in deterministic polynomial time; its search task is to output a witness on every input for which one exists. It belongs to $\TFNPS$ when it is additionally total, namely $\forall x\,\exists w\,R(x,w)$. A search relation belongs to $\FPS$ when one deterministic polynomial-time algorithm outputs an accepted witness on every input in its domain. Thus $\TFNPS\subseteq\FNPS$. Multiple witnesses may exist; unique projection below requires only that their continuation component agree.
\end{definition}

\begin{definition}[Certified-actualisation protocol]\label{def:objectification}
A substantive agency frame has a certified actualisation interface when the following conditions hold.
\begin{description}[style=nextline]
 \item[(O1) Finite uniform encoding.] Occurrences, histories, continuations, and traces have a fixed effective self-delimiting encoding.
 \item[(O2) Polynomially recognisable unbounded domain.] The valid occurrence encodings form an infinite language $I\subseteq\{0,1\}^*$ with unbounded input lengths and $I\in\mathrm P$.
 \item[(O3) Polynomial balance.] There is a fixed polynomial $p$ such that every accepted pair $w=(s,\tau)$ for every valid $q$ satisfies $|w|\le p(|q|)$.
 \item[(O4) Fixed polynomial verification and soundness.] A fixed deterministic polynomial-time predicate $V(q,s,\tau)$ verifies certificates, and acceptance for valid $q$ implies $s\in G(q)$.
 \item[(O5) Certified completeness and extensional unique projection.] Accepted certificates exist for every valid occurrence and every string accepted by the mathematical verifier projects to the same continuation:
 \[
   \exists\tau\,V(q,\alpha(q),\tau)=1,
   \qquad
   V(q,s,\tau)=1\Rightarrow s=\alpha(q).
 \]
 \item[(O6) Pre-act completeness and occurrence discipline.] Each occurrence input is $q=(x,e)$, where $x$ contains the intrinsic information declared relevant before the act and $e$ is a polynomially bounded occurrence identifier fixed independently of the realised continuation. Prior admissibility depends only on $x$, so $G(x,e)=G(x,e')$ whenever both occurrences share the same prior state. Neither a post-act trace nor outcome-encoding advice occurs in $q$.
\end{description}
\end{definition}

Condition (O5) is an independent certified-actualisation axiom. It is not entailed by substantive agency, computability, the physical existence of a post-act trace, ordinary authentication, or polynomial verification. In particular, verification of a digital signature establishes validity relative to a key but does not by itself ensure that no mathematically valid signature string exists for an alternative continuation. A concrete application must therefore specify how the verifier's extensional accepted language is bound to one committed post-act state. The results below assume this property; they do not derive it from generic recording technology.

\begin{definition}[Non-oracular provenance condition]\label{def:provenance}
For the extensional relation to represent historical recording rather than a stipulated answer graph, require:
\begin{description}[style=nextline]
 \item[(P1) Post-act generation.] A fixed outcome-independent recording channel $\rho$, committed before the act, produces its raw record only after actualisation. A fixed polynomial-time extractor converts that record into $\tau$. Neither $\rho$, the extractor, nor $V$ contains occurrence-specific advice, a future-outcome table, or an oracle for $\alpha$.
 \item[(P2) Verifiable linkage.] The verifier checks that the trace is linked to the precommitted occurrence identifier and to the record emitted by $\rho$. Any cryptographic or physical trust assumption used for that linkage must be stated as part of the concrete application.
\end{description}
\end{definition}

Define
\begin{equation}\label{eq:rsel}
 \RSEL(q,s,\tau)=1
 \quad\Longleftrightarrow\quad V(q,s,\tau)=1.
\end{equation}
The associated search witness is the pair $w=(s,\tau)$. Conditions (O1)--(O6) define the extensional search problem. Conditions (P1)--(P2) are deliberately not used to prove its complexity classification: they state the additional scientific obligation for interpreting accepted strings as records caused by completed acts rather than as a disguised outcome oracle. A precommitted authenticated append-only recorder may support provenance by binding $e$ before the act and appending $(e,s,r)$ afterwards, but authentication alone does not establish (O5). The extensional construction is totalised explicitly by
\[
 \widehat V(q,w)=1
 \quad\Longleftrightarrow\quad
 \bigl(q\in I\land w=(s,\tau)\land V(q,s,\tau)=1\bigr)
 \ \lor\ 
 \bigl(q\notin I\land w=\bot\bigr).
\]
Thus $\bot$ is accepted exactly on invalid inputs and rejected on valid ones; membership remains polynomial-time decidable by (O2).

\begin{proposition}[Certified realisation entails computability]\label{prop:objectification-computable}
Every family satisfying (O1)--(O5) has a Turing-computable singularisation map $q\mapsto\alpha(q)$ on its valid domain.
\end{proposition}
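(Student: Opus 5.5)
The proof is a brute-force witness search. It uses only (O1)--(O5) and never appeals to provenance or agency. On input a valid $q\in I$, the machine enumerates all candidate pairs $w=(s,\tau)$ whose encoding has length at most $p(|q|)$, in a fixed order such as length-lexicographic. Condition (O1) makes this enumeration effective and self-delimiting, and condition (O3) guarantees that it is finite. For each candidate the machine runs the fixed verifier $V(q,s,\tau)$, which by (O4) is a total deterministic procedure. It halts at the first accepted pair and outputs its continuation component $s$.

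The key steps, in order, are as follows.

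First, I show that the search terminates with an accepted pair. By (O5) there is some $\tau$ with $V(q,\alpha(q),\tau)=1$. By (O3) the accepted pair $(\alpha(q),\tau)$ has length at most $p(|q|)$, so it lies inside the enumerated finite set. The search therefore finds some accepted pair.

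Second, I show that the output is correct. The extensional unique projection clause of (O5) says that any accepted $(s,\tau)$ satisfies $s=\alpha(q)$. So whichever accepted pair is found first, its first component is $\alpha(q)$. This makes the output independent of the enumeration order. By (O4) soundness, this value also lies in $G(q)$.

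Third, I check uniformity. The procedure is a single Turing machine built from $p$, $V$ and the encoding. All three are fixed, so the procedure uses no occurrence-specific advice. If desired, (O2) lets the machine first decide $q\in I$ and output $\bot$ otherwise, which gives a total function on $\{0,1\}^*$. The statement needs computability only on the valid domain, though.

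The one step requiring care is the bound used in the first step. Without (O3), the search would only be a semi-decision procedure. It would still halt on valid inputs by (O5), since we could dovetail over all lengths, but boundedness is what makes the argument clean. The real substance is unique projection. Without it, the search would return some admissible certified continuation, but not necessarily the realised one. I would therefore stress that the conclusion is only Turing computability. The running time is exponential in $p(|q|)$, so nothing here conflicts with \cref{thm:predictive-collapse}.
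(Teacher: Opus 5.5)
Your proof is correct and takes the same approach as the paper. Both enumerate every candidate witness up to the (O3) length bound and run the fixed verifier from (O4); (O5) then guarantees that some pair is accepted, and unique projection guarantees that the returned first component is $\alpha(q)$. Your extra remarks (order-independence, the optional $\bot$-totalisation via (O2), the dovetailing alternative without (O3), and the exponential running time) are accurate additions to the same argument.
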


\begin{proof}
On input $q$, enumerate all strings $w=(s,\tau)$ of length at most the polynomial bound in (O3) and evaluate the fixed verifier from (O4). Historical adequacy guarantees that at least one string is accepted. Unique projection guarantees that every accepted string has the same first component, namely $\alpha(q)$. Returning that component computes $\alpha(q)$. The procedure may take exponential time; the proposition asserts computability, not efficient computability.
\end{proof}

The converse fails. This fact is essential: it prevents the exhaustive philosophical bifurcation from silently containing the FNP hypothesis needed for the complexity result.

\begin{theorem}[Realisation gap]\label{thm:realisation-gap}
There exists a total Boolean function $f:\{0,1\}^*\to\{0,1\}$ that is Turing-computable and not polynomial-time computable, but for which no polynomially balanced polynomial-time decidable total relation has unique projection $f$.
\end{theorem}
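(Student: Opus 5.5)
The plan is to translate unique projection into a membership condition and then diagonalise. First, suppose $R(x,w)$ is polynomially balanced, decidable in polynomial time, total, and has unique projection $f$, with witnesses $w=(b,\tau)$. Then for every $x$ we have $f(x)=1$ iff $\exists\tau\,R(x,(1,\tau))$, and $f(x)=0$ iff $\exists\tau\,R(x,(0,\tau))$. The forward direction is totality together with unique projection; the backward direction is unique projection alone. Hence the language $L_f=\{x:f(x)=1\}$ and its complement both lie in NP, that is, $L_f\in\mathrm{NP}\cap\mathrm{coNP}$. Conversely, this is the only constraint we need to violate. It therefore suffices to exhibit a computable $f$ with $L_f\notin\mathrm{NP}$. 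Such an $f$ is automatically outside polynomial time, because $\mathrm P\subseteq\mathrm{NP}$.

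For the construction I would give a self-contained diagonalisation rather than cite the nondeterministic time hierarchy theorem, although that citation (giving $\mathrm{NP}\subsetneq\mathrm{NEXP}$) would also suffice. Fix an effective enumeration $(V_i,p_i)_{i\ge1}$ of all pairs consisting of a deterministic machine and a polynomial $p_i(n)=n^i+i$. Let $V_i^{\rm cl}$ denote $V_i$ clocked to run for $p_i(|x|+|w|)$ steps and reject on timeout. Every polynomially balanced, polynomial-time decidable relation coincides, on all pairs with $|w|\le p(|x|)$, with some clocked pair in this list; this is ensured by padding the index. Define $f$ on the diagonal inputs $x_i=1^i$ as follows: set $f(x_i)=1$ if no $w=(1,\tau)$ with $|w|\le p_i(i)$ is accepted by $V_i^{\rm cl}$, and $f(x_i)=0$ otherwise. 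Set $f(x)=0$ off the diagonal. This $f$ is computable by exhaustive search over the finitely many bounded witnesses, and it is total.

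It remains to verify that no $(V_i,p_i)$ certifies $L_f$ in the NP sense. At $x_i$, $f(x_i)=1$ holds exactly when $V_i^{\rm cl}$ finds no $1$-witness, so the $i$-th candidate errs at $x_i$. Any putative NP verifier for $L_f$ equals some $(V_i,p_i)$ with clocks large enough. Such a verifier would therefore have to accept a $1$-witness at $x_i$ iff $f(x_i)=1$, which contradicts the definition of $f$. Hence $L_f\notin\mathrm{NP}$, so $L_f\notin\mathrm{NP}\cap\mathrm{coNP}$. By the first paragraph, no relation of the required kind has unique projection $f$, and $f\notin\mathrm{FP}$.

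The step I expect to need the most care is the enumeration bookkeeping. I must make sure that every genuine polynomially balanced, polynomial-time verifier appears with both a time bound and a witness-length bound dominating its true ones. Otherwise diagonalising only at $1^i$ could miss it. I would handle this with the standard device that each machine occurs infinitely often with increasing $i$, so that some occurrence has $p_i$ exceeding both bounds. Finally, I would note explicitly that the witness format $(s,\tau)$ with $s\in\{0,1\}$ matches the paper's use of unique projection in (O5).
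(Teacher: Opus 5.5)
Your proof is correct, but it follows a different route from the paper's. The paper uses the deterministic time-hierarchy theorem to take a decidable $L\notin\mathrm{EXP}$. It then notes that a balanced, polynomial-time, total relation with unique projection would let brute-force enumeration of the $2^{\poly(|x|)}$ candidate witnesses decide $L$ in exponential time. That argument uses only the crude upper bound already exploited in \cref{prop:objectification-computable}.

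You instead extract the exact content of totality plus unique projection, namely $\{x:f(x)=1\}\in\mathrm{NP}\cap\mathrm{coNP}$. You then diagonalise directly against verifiers on the tally inputs $1^i$. Your reduction is sharper once you add the easy converse: if $V_1$ and $V_0$ verify a language and its complement, then $R(x,(b,\tau))\Leftrightarrow V_b(x,\tau)=1$ is total with unique projection. Together these characterise the Boolean maps admitting such a certified relation as exactly those with $L_f\in\mathrm{NP}\cap\mathrm{coNP}$. So any computable $f$ with $L_f\notin\mathrm{NP}$ already witnesses the gap, for instance the indicator of a $\mathrm{NEXP}$-complete language. The paper's route, by contrast, needs $L_f\notin\mathrm{EXP}$.

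The price is the enumeration bookkeeping you flag, and listing each machine at infinitely many indices handles it. Polynomial balance of the relation ensures that the larger search bound $p_i(i)$ admits no spurious long witnesses. Your diagonal tests only witnesses of the form $(1,\tau)$, which is exactly what your first paragraph requires. So you never actually need the general statement $L_f\notin\mathrm{NP}$, though it follows by recasting any verifier $V$ as $(x,(1,\tau))\mapsto V(x,\tau)$.
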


\begin{proof}
By the deterministic time-hierarchy theorem, choose a decidable language $L\notin\mathrm{EXP}$ and let $f(x)=1$ exactly when $x\in L$ \cite{AroraBarak2009}. Suppose that a relation $R(x,b,\tau)$ were decidable in polynomial time, total, polynomially balanced, and satisfied
\[
 R(x,b,\tau)=1\quad\Longrightarrow\quad b=f(x).
\]
On input $x$, enumerate every pair $(b,\tau)$ within the polynomial balance and run the verifier. Totality finds an accepted pair, and unique projection returns $f(x)$. There are at most $2^{\poly(|x|)}$ candidates, so this decides $L$ in exponential time, contradicting $L\notin\mathrm{EXP}$.
\end{proof}

Thus a computable but non-polynomial singularisation need not possess short retrospectively checkable certificates. The computational companion defines and analyses the corresponding certified families, but does not prove that a natural human-decision family satisfies them. The present theorem shows why such satisfaction is substantive interface work, not a consequence of computability alone.

\begin{remark}[Temporal and extensional viewpoints]\label{rem:temporal}
Complexity theory treats $\RSEL$ extensionally as a fixed relation. Conditions (P1)--(P2) do not strengthen the class-theoretic proof; they govern whether that relation is a faithful historical model. A concrete application must instantiate the channel, linkage, and trust assumptions rather than merely postulate accepted strings.
\end{remark}

\section{Exhaustive bifurcation and companion interfaces}\label{sec:interfaces}

\subsection{Notation concordance}

The three papers use the following common interface without identifying distinct levels:
\begin{equation}\label{eq:notation-concordance}
 \begin{gathered}
 q=(x,e),\qquad G(q)=\Gamma(F_q)=\GLUE(x),\qquad
 \alpha(q)=\SELECT(q),\\
 \RSEL(q,s,\tau)=1\ \Longleftrightarrow\ V(q,s,\tau)=1,
 \qquad w=(s,\tau).
 \end{gathered}
\end{equation}
For the sequential ontological presentation, the complete occurrence input $q_i$ is effectively identified with the causal frame $\langle i,h_i\rangle$, and $s_i^*=\alpha(q_i)$. The symbols $\LAc$, $\LAo$, and $\LAo^K$ retain exactly the meanings fixed in their respective companion papers. Only $\LAeff$ and $\LAnc$ are new intermediate branch names introduced here.

\subsection{The level at which exhaustiveness holds}

\begin{definition}[Effective presentation and the two branches]\label{def:effective-branches}
An agency family is \emph{effectively presented} when its valid occurrences form an infinite unbounded language $I\subseteq\{0,1\}^*$ with $I\in\mathrm P$, its complete declared pre-act inputs and continuations have uniform finite encodings, and $\alpha$ is total on $I$. Extend it canonically by $\bar\alpha(q)=\alpha(q)$ for $q\in I$ and $\bar\alpha(q)=\bot$ otherwise. A substantive agency frame realises \emph{effective substantive agency}, denoted $\LAeff$, when $\bar\alpha$ is Turing-computable and $\bar\alpha\notin\mathrm{FP}$ on the fixed encoding. It realises the \emph{noncomputable branch}, denoted $\LAnc$, when $\bar\alpha$ is not Turing-computable from the complete encoded pre-act inputs.
\end{definition}

The adjective ``effective'' in $\LAeff$ describes the computability of the completed occurrence map, not causal determinism, efficient computation, or FNP certification. The notation $\LAnc$ is deliberately distinct from $\LAo$: the ontological companion uses $\LAo$ for global plurality, structural non-singularisation, and pointing by the act, and $\LAo^K=\LAo+\AIK$ for its strong incompressible model. The two papers therefore do not assign different meanings to the same symbol.

\begin{theorem}[Effective--noncomputable bifurcation]\label{thm:bifurcation}
Let $\mathfrak A$ be an effectively presented substantive agency frame satisfying effective-pointing closure. Exactly one of the following holds:
\begin{enumerate}[label=(\alph*)]
 \item $\mathfrak A$ realises $\LAeff$;
 \item $\mathfrak A$ realises $\LAnc$.
\end{enumerate}
\end{theorem}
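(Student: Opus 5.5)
The plan is a case split on Turing-computability of the canonical extension $\bar\alpha$ from \cref{def:effective-branches}, with \cref{thm:predictive-collapse} supplying the one nontrivial step. Effective presentation makes $I\in\mathrm P$, fixes uniform encodings, and makes $\alpha$ total on $I$. So $\bar\alpha$ is a well-defined total function on $\{0,1\}^*$, and the question ``is $\bar\alpha$ Turing-computable from the complete encoded pre-act inputs?'' has a definite yes-or-no answer. By excluded middle exactly one answer holds. This gives mutual exclusivity at once: $\LAeff$ requires computability and $\LAnc$ requires its failure, so (a) and (b) cannot both hold.

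For exhaustiveness, suppose $\bar\alpha$ is computable. We must show $\bar\alpha\notin\mathrm{FP}$. Assume instead that $\bar\alpha\in\mathrm{FP}$. Restricting the polynomial-time machine to valid $q\in I$ gives a standard exact pre-act algorithm in the sense of \cref{def:predictor}. It is uniform, deterministic, and polynomial in $|q|$, and it uses no post-act trace or advice, since its only input is the complete pre-act encoding. Hence $\alpha\in\mathrm{FP}$ on the occurrence family, which contradicts \cref{thm:predictive-collapse} under the hypothesis of effective-pointing closure. So $\bar\alpha$ is computable and not in FP, that is, $\LAeff$. If $\bar\alpha$ is not computable, $\LAnc$ holds by definition.

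The main obstacle is the matching between the two notions of FP. \cref{thm:predictive-collapse} is phrased for $\alpha$ on the occurrence family, whereas the branches are phrased for the total extension $\bar\alpha$. I would spell out both directions. First, $\bar\alpha\in\mathrm{FP}$ implies $\alpha\in\mathrm{FP}$ on the family, by restriction. Second, $\alpha\in\mathrm{FP}$ on the family implies $\bar\alpha\in\mathrm{FP}$: first decide $q\in I$ in polynomial time, using $I\in\mathrm P$, and output $\bot$ when $q\notin I$. This is exactly why $I\in\mathrm P$ is built into effective presentation.

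A related subtlety is whether ``computable from the complete encoded pre-act inputs'' in $\LAnc$ coincides with plain Turing-computability of $\bar\alpha$ in $\LAeff$. I would note that the encoding fixed by effective presentation is the complete pre-act encoding, so the two phrasings refer to the same function and the dichotomy is a genuine partition. Finally, I would remark that certification via (O1)--(O5) is not used here. By \cref{prop:objectification-computable} it would additionally force branch (a), but that belongs to the later interface, and \cref{thm:realisation-gap} shows it cannot be presupposed.
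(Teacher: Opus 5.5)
Your proposal is correct and matches the paper's proof: a case split on the Turing-computability of $\bar\alpha$, with the computable case closed by restricting a putative FP machine for $\bar\alpha$ to $I$ and invoking \cref{thm:predictive-collapse}. Only the restriction direction of your FP-matching is actually needed; the converse direction, which uses $I\in\mathrm P$, is true but plays no role here.
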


\begin{proof}
The totalised encoded map $\bar\alpha$ is either Turing-computable or not. These cases are mutually exclusive and exhaustive. In the computable case, \cref{thm:predictive-collapse} excludes every uniform polynomial-time algorithm agreeing with $\alpha$ on all valid occurrences. If $\bar\alpha$ belonged to $\mathrm{FP}$, its restriction to $I$ would be such an algorithm, a contradiction; hence $\bar\alpha\notin\mathrm{FP}$ and the frame realises $\LAeff$. In the noncomputable case, the definition gives $\LAnc$.
\end{proof}

In compressed form, the theorem establishes
\begin{equation}\label{eq:central-bifurcation}
 \begin{gathered}
 \text{substantive agency}+\text{effective presentation}\\
 {}+\text{effective-pointing closure}
 \quad\Longrightarrow\quad
 \LAeff\ \mathbin{\dot\vee}\ \LAnc
 \end{gathered}
\end{equation}
where $\dot\vee$ denotes mutually exclusive alternatives. The display is mnemonic rather than a replacement for the theorem's hypotheses. The substantive premise includes (A3-C) independently of running time; this prevents deterministic super-polynomial mechanisms from entering $\LAeff$. The FP exclusion comes from the separately stated modal effective-pointing closure, not from actual non-pointing alone.

This is a domain-relative exhaustive classification, not a closure of every philosophical or physical alternative. Inside the declared domain there is no third computability status between computable and noncomputable. Outside it, one may reject effective presentation, passive-extension invariance, or effective deployment without thereby adopting compatibilism or illusionism. There is also a distinct \emph{realisation question}: whether an $\LAeff$ family has the certified-actualisation interface required by the computational companion. \Cref{thm:realisation-gap} proves that this further property is not automatic.

\subsection{Computational interface}

\begin{definition}[Computational free will]
A family realises computational free will, denoted $\LAc$, exactly in the sense of \cite{Clech2026a}: for occurrence inputs $q=(x,e)$ it satisfies the six clauses $(D,P,G,S,T,A)$ stated there. The notation is not strengthened here. Conditions (O1)--(O6) are the formal entry hypotheses; (P1)--(P2) are the additional provenance conditions for a historical interpretation.
\end{definition}

\begin{theorem}[Interface to computational global selection]\label{thm:computational-interface}
Let $\mathfrak A$ realise $\LAeff$ and satisfy (O1)--(O6). Write each occurrence as $q=(x,e)$ as in (O6), and identify
\[
 \GLUE(x)=G(q),\qquad \SELECT(q)=\alpha(q).
\]
Then $\mathfrak A$ realises $\LAc$ in the exact sense of \cite{Clech2026a} and satisfies the additional entry hypotheses of its conditional separation theorem.
\end{theorem}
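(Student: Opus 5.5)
The proof is a clause-by-clause verification. The six clauses $(D,P,G,S,T,A)$ of $\LAc$ in \cite{Clech2026a} are matched against the data already present in $\mathfrak A$ under the identifications $\GLUE(x)=G(q)$ and $\SELECT(q)=\alpha(q)$. I take the clauses to mean roughly the following: a polynomially recognisable domain of occurrences $(D)$, plurality $(P)$, global coherence via gluing $(G)$, definite selection $(S)$, a polynomially balanced trace relation with fixed polynomial verification $(T)$, and non-anticipability $(A)$, meaning $\SELECT\notin\mathrm{FP}$. I will make each match explicit and cite the earlier result or axiom supplying it.

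The first step is to check that the identification is well defined. By (O6), $G(x,e)=G(x,e')$, so $\GLUE$ depends only on $x$. The sheaf presentation of the representation section gives $\Gamma(F_q)=G(q)$, so $\GLUE(x)$ is the set of compatible global sections.

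The clauses then follow in order:
\begin{itemize}
\item $(D)$ comes from (O1)--(O2), together with the canonical extension $\bar\alpha$ of \cref{def:effective-branches}.
\item $(G)$ comes from (A1).
\item $(P)$ comes from (A2), since $|\GLUE(x)|\ge 2$.
\item $(S)$ comes from (A4), since $\SELECT(q)=\alpha(q)\in\GLUE(x)$.
\item $(T)$ comes from (O3)--(O5). The relation $\RSEL$ of \eqref{eq:rsel} is polynomially balanced, decidable in polynomial time through $\widehat V$, total by certified completeness, and uniquely projecting onto $\SELECT$. Hence $\RSEL\in\TFNPS$.
\item $(A)$ is exactly the clause $\bar\alpha\notin\mathrm{FP}$ in the definition of $\LAeff$.
\end{itemize}
As a consistency check, \cref{prop:objectification-computable} shows that the computable half of $\LAeff$ is recovered from (O1)--(O5), so the hypotheses are not in tension.

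For the additional entry hypotheses of the conditional separation theorem, I will show that $\RSEL\notin\FPS$. Suppose a polynomial-time solver returned an accepted witness $(s,\tau)$. By (O5), the first component satisfies $s=\alpha(q)$, so projecting to it would place $\bar\alpha$ in $\mathrm{FP}$ (outputting $\bot$ off $I$ via (O2)), contradicting $\LAeff$. The pre-act discipline of (O6), with no traces or advice in $q$, supplies the remaining input-form hypothesis.

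The main obstacle is that the exact wording of $(D,P,G,S,T,A)$ lies in the companion paper. The risk is that some clause, for instance a uniformity requirement on $e$ or a representation-robustness condition, has no verbatim counterpart among (O1)--(O6). I would address this first by invoking \cref{prop:representation-robustness} to transport the classification across effectively equivalent encodings, and by reading any extra conditions off (O6) and (O3). If some clause still remains unmatched, I would state it explicitly as part of the interface rather than claim it is derived.
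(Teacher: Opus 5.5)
Your clause-by-clause matching is correct and is essentially the paper's own proof: (A1), (A2) and (A4) give (G), (P) and (S), (O3)--(O5) give (T), and the definition of $\LAeff$ gives (A). The differences are bookkeeping only. The paper draws (D) from the representation lemma together with (O6), and it invokes \cref{prop:representation-robustness} directly for the representation-robust form of (A) rather than holding it in reserve. Your extra argument that the relation is not in $\FPS$ is proved separately in the paper, as \cref{lem:certified-search}.
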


\begin{proof}
The representation lemma and (O6) provide clause (D). Conditions (A1) and (A2) give the coherent non-singleton set $\GLUE(x)=G(q)$ required by (P) and (G), while (A4) gives (S) with $\SELECT(q)=\alpha(q)$. Conditions (O3)--(O5) give the uniform retrospective trace required by (T). The definition of $\LAeff$ supplies $\alpha\notin\mathrm{FP}$, and \cref{prop:representation-robustness} gives the representation-robust form of (A). Thus $(D,P,G,S,T,A)$ hold. Conditions (P1)--(P2), when also instantiated, justify interpreting the accepted traces as post-act historical records.
\end{proof}

\begin{proposition}[Certified graph interface]\label{prop:certified-graph}
Under (O1)--(O5), the accepted witnesses form a polynomially balanced, polynomial-time decidable relation which is total on valid occurrences and whose continuation projection is the graph of $\alpha$: for every valid $q$, accepted witnesses exist, and every accepted $(s,\tau)$ satisfies $s=\alpha(q)$.
\end{proposition}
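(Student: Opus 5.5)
The plan is to read the proposition directly off the clauses of \cref{def:objectification}, using the relation $\RSEL$ of \eqref{eq:rsel} as the relation in question. I take the witness for a valid $q$ to be $w=(s,\tau)$, so that the accepted witnesses are exactly the pairs with $V(q,s,\tau)=1$. There are four properties to check: polynomial balance, polynomial-time decidability, totality on valid occurrences, and unique projection onto the graph of $\alpha$.

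First, polynomial balance is exactly (O3): every accepted pair satisfies $|w|\le p(|q|)$. Second, for decidability I would use the fixed deterministic polynomial-time predicate from (O4). On a string pair $(q,w)$, the decider will do three things in order:
\begin{enumerate}[label=(\roman*)]
 \item test $q\in I$ in polynomial time, using (O2);
 \item parse $w$ as a pair $(s,\tau)$, using the self-delimiting encoding of (O1), and reject if $|w|>p(|q|)$;
 \item run $V(q,s,\tau)$.
\end{enumerate}
Each step costs time polynomial in $|q|+|w|$, so membership is polynomial-time decidable. If totality over all of $\{0,1\}^*$ is wanted rather than only over $I$, I would invoke the totalised verifier $\widehat V$. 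Its decidability follows from the same steps, with the extra clause accepting $w=\bot$ exactly when $q\notin I$.

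Third, totality on valid occurrences is the first half of (O5): for each $q\in I$ there is some $\tau$ with $V(q,\alpha(q),\tau)=1$. This gives the accepted witness $(\alpha(q),\tau)$, and its length is bounded by (O3).

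Fourth, for unique projection, the second half of (O5) gives $V(q,s,\tau)=1\Rightarrow s=\alpha(q)$. So the projection of the accepted witnesses onto the continuation component is contained in the graph of $\alpha$. Totality gives the reverse inclusion, since $\alpha(q)$ itself is attained. Hence the projection equals $\{(q,\alpha(q)):q\in I\}$. Soundness in (O4) is consistent with this, because $\alpha(q)\in G(q)$ by (A4).

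I expect no step to be a genuine obstacle; the argument is bookkeeping. The one point needing care is that decidability must be uniform in the input string and not only in $q$. That is why the balance bound should be enforced syntactically in the decider (step (ii)), rather than relied on as a property of accepted pairs.
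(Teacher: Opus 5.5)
Your proposal is correct and takes the same route as the paper: balance and polynomial verification come from (O3)--(O4), existence and unique projection come from the two clauses of (O5), and so the projected accepted pairs are exactly $\{(q,\alpha(q)):q\in I\}$. Enforcing the length bound inside the decider is harmless extra bookkeeping, since (O3) already guarantees that no accepted pair exceeds it.
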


\begin{proof}
Polynomial balance and verification are (O3) and (O4). Existence and unique continuation projection are the two clauses of (O5). Hence the set of projected accepted pairs is exactly $\{(q,\alpha(q)):q\in I\}$, although several traces $\tau$ may certify the same projected value.
\end{proof}

This proposition identifies the constructive computational interface without attributing it to substantive agency alone. In complexity-theoretic terms, (O3)--(O5) supply a total polynomial-witness presentation of the graph of $\alpha$ with unique output projection. The later search-to-decision implication is standard. The distinctive result of the paper lies in composing that independently stated interface with the modal route from substantive singularisation to exclusion of an FP selector, while \cref{thm:realisation-gap} proves that neither side silently contains the other.

\begin{lemma}[Certified search relation]\label{lem:certified-search}
If (O1)--(O5) hold, the totalised search problem associated with $\RSEL$ belongs to $\TFNPS$. If, in addition, $\alpha\notin\mathrm{FP}$ on valid occurrences, it does not belong to $\FPS$.
\end{lemma}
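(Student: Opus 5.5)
The first claim reduces to checking the definitions. The totalised relation is the one given by the predicate $\widehat V(q,w)$ defined after \eqref{eq:rsel}. I would verify the three defining properties of $\TFNPS$ from \cref{def:search-classes} one at a time.

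\emph{Polynomial balance.} On valid $q$, every accepted witness $w=(s,\tau)$ satisfies $|w|\le p(|q|)$ by (O3). On invalid $q$, the only accepted witness is $\bot$, which has constant length.

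\emph{Polynomial-time membership.} Decide $q\in I$ in polynomial time using (O2). If $q\in I$, parse $w$ using the self-delimiting encoding of (O1), reject any $w$ longer than $p(|q|)$, and otherwise run the fixed verifier $V$ of (O4). If $q\notin I$, test whether $w=\bot$.

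\emph{Totality.} For valid $q$, the completeness clause of (O5) provides some $\tau$ with $V(q,\alpha(q),\tau)=1$. For invalid $q$, the witness $\bot$ is accepted. Hence the relation lies in $\FNPS$ and is total, so it belongs to $\TFNPS$. This is essentially \cref{prop:certified-graph}, extended to the invalid strings.

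For the second claim I would argue by contradiction using unique projection. Suppose a deterministic polynomial-time algorithm $A$ outputs an accepted witness on every input. Define $B(q)$ as follows: if $q\in I$, run $A(q)$, obtain $(s,\tau)$, and return $s$; otherwise return $\bot$. By the projection clause of (O5), every accepted witness on a valid $q$ has first component $\alpha(q)$, so $B(q)=\alpha(q)$ on $I$ and $B=\bar\alpha$ everywhere. Extracting $s$ from the self-delimiting pair takes polynomial time, and $|s|\le|w|\le p(|q|)$. Therefore $B$ is a uniform polynomial-time machine computing $\alpha$ on the valid occurrences, contradicting $\alpha\notin\mathrm{FP}$.

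The main obstacle is the bookkeeping between the family-relative statement $\alpha\notin\mathrm{FP}$ on valid occurrences and the totalised function $\bar\alpha$ on $\{0,1\}^*$. I would handle it by noting that $I\in\mathrm P$ makes the restriction of a polynomial algorithm to $I$, and its extension by $\bot$, polynomially interconvertible, exactly as in the proof of \cref{thm:bifurcation}. The output-length bound on $B$ also needs to be stated explicitly; it follows from (O3). No modal premise is used here, since the hypothesis $\alpha\notin\mathrm{FP}$ is simply taken as given.
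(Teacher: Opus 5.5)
Your proof is correct and follows the paper's argument. Membership in $\TFNPS$ comes from (O2)--(O5) together with the fixed witness $\bot$ on invalid inputs, and the non-$\FPS$ claim projects the first component of any polynomial-time solver's output, which unique projection in (O5) forces to equal $\alpha(q)$. Your additional bookkeeping (the length cutoff, the explicit extension to $\bar\alpha$, and the output-length bound) only makes explicit what the paper's proof leaves implicit.
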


\begin{proof}
Polynomial balance, polynomial verification, and historical completeness give an FNP search relation on the valid language. Since $I\in\mathrm P$, invalid inputs can be totalised by the fixed witness $\bot$ without changing polynomial verification or balance; the resulting relation lies in $\TFNPS$. Suppose a deterministic polynomial-time search algorithm returned some accepted witness $(s,\tau)$ on every valid $q$. Unique projection in (O5) forces $s=\alpha(q)$. Projecting the first component would therefore compute $\alpha$ in polynomial time on the entire valid family, contrary to $\alpha\notin\mathrm{FP}$. Hence the relation is not in $\FPS$.
\end{proof}

\begin{theorem}[Substantive agency to conditional complexity separation]\label{thm:conditional-separation}
Let $\mathfrak A$ be a substantive agency frame satisfying effective-pointing closure and the certified-actualisation protocol (O1)--(O6). Then $\mathfrak A$ realises $\LAc$, its associated total search relation lies in $\TFNPS\setminus\FPS\subseteq\FNPS\setminus\FPS$, and therefore
\[
 P\ne NP.
\]
This is conditional on the existence of a family satisfying the stated agency, closure, effective-presentation, and certification premises.
\end{theorem}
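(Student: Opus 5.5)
The argument runs as a chain through the earlier results, with effective presentation extracted from the certification premises.

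First I would check that (O1)--(O6) make $\mathfrak A$ effectively presented in the sense of \cref{def:effective-branches}. (O1) supplies uniform encodings and (O2) supplies an infinite unbounded valid language $I\in\mathrm P$. Totality of $\alpha$ on $I$ holds because (A4) gives a definite continuation for every occurrence and the first clause of (O5) certifies it. Then \cref{prop:objectification-computable} shows that $\alpha$ is Turing-computable on $I$. Since $I\in\mathrm P$, the extension $\bar\alpha$ by $\bot$ is computable. This rules out the branch $\LAnc$ of \cref{thm:bifurcation}. Effective-pointing closure then places the frame in $\LAeff$; equivalently, \cref{thm:predictive-collapse} gives $\alpha\notin\mathrm{FP}$ directly. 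With $\LAeff$ and (O1)--(O6), \cref{thm:computational-interface} yields that $\mathfrak A$ realises $\LAc$.

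Second, \cref{lem:certified-search} applies to the totalised relation $\widehat V$ built from $\RSEL$. It places the relation in $\TFNPS$, and because $\alpha\notin\mathrm{FP}$ on valid occurrences, not in $\FPS$. The inclusion $\TFNPS\subseteq\FNPS$ is immediate from \cref{def:search-classes}.

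The last step, and the one needing genuine care, is deriving $P\ne NP$ from a relation in $\FNPS\setminus\FPS$. I would argue by contraposition with a prefix-extension search-to-decision reduction. Assume $P=NP$ and consider the language
\[
 L=\{(q,u): \exists v,\ |uv|\le p(|q|),\ \widehat V(q,uv)=1\}.
\]
This language is in $NP$, since the witness $v$ is polynomially bounded and $\widehat V$ is polynomial-time decidable. So under $P=NP$ it is in $\mathrm P$. Starting from the empty prefix, the algorithm extends $u$ bit by bit (with an end marker, or by testing $\widehat V(q,u)$ at each length), keeping only extensions that stay in $L$. Totality guarantees the empty prefix is in $L$, so after at most $p(|q|)$ rounds this outputs an accepted witness in polynomial time. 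The relation would then lie in $\FPS$, a contradiction. I would then project the first component to note that this contradiction is exactly with $\alpha\notin\mathrm{FP}$.

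The main obstacle is bookkeeping rather than depth. The self-delimiting encoding from (O1) has to make prefix extension well defined, so that termination is detected when the current prefix is itself accepted. The totalisation by $\bot$ on invalid inputs also has to be handled uniformly. Finally, I would state explicitly that the conclusion is conditional on a frame satisfying all the premises actually existing, as the theorem says; the proof itself only composes implications.
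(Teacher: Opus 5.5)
Your proposal is correct and follows the paper's proof essentially step for step: effective presentation from (O1), (O2) (with (O6)), Turing-computability via \cref{prop:objectification-computable} to exclude $\LAnc$, $\alpha\notin\mathrm{FP}$ from \cref{thm:predictive-collapse}, $\LAc$ via \cref{thm:computational-interface}, membership in $\TFNPS\setminus\FPS$ via \cref{lem:certified-search}, and a prefix-extension search-to-decision argument under $P=NP$. The only difference is cosmetic: you detect termination by testing $\widehat V(q,u)$ at each step, whereas the paper pads every witness to exactly $p(|x|)$ bits and makes exactly $p(|x|)$ queries; both are valid, and your variant does not actually need the self-delimiting encoding that you flag as the main obstacle.
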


\begin{proof}
Conditions (O1), (O2), and (O6) supply an effective presentation. Conditions (O1)--(O5) make $\alpha$ Turing-computable by \cref{prop:objectification-computable}, so the noncomputable branch is excluded. Effective-pointing closure and \cref{thm:predictive-collapse} give $\alpha\notin\mathrm{FP}$. Hence the frame realises $\LAeff$, and \cref{thm:computational-interface} yields $\LAc$. By \cref{lem:certified-search}, the associated total relation belongs to $\TFNPS\setminus\FPS$.

It remains to record the standard decision consequence. Assume for contradiction that $P=NP$. Replace each witness by a self-delimiting encoding padded to exactly $p(|x|)$ bits; the verifier checks the delimiter, ignores valid padding, and retains polynomial time. For any such polynomially balanced polynomial-time relation $R(x,w)$, the language
\[
 L_R=\{(x,u):\text{some accepted witness }w\text{ extends the prefix }u\}
\]
is in NP and hence, under the assumption, in P. Starting with the empty prefix, query whether an accepted length-$p(|x|)$ completion exists after appending $0$; if so retain $0$, otherwise retain $1$. After exactly $p(|x|)$ queries the decoded string is an accepted witness. Thus, under $P=NP$, every FNP search relation, and in particular every TFNP relation, has a deterministic polynomial-time selector. This contradicts the relation just constructed. Therefore $P\ne NP$.
\end{proof}

\begin{corollary}[Historically grounded realisation]\label{cor:historical-grounding}
Under the hypotheses of \cref{thm:conditional-separation}, if (P1)--(P2) also hold, the certificates used by $\RSEL$ are generated and linked post-act without a future-outcome oracle. The complexity conclusion is unchanged; the additional conditions justify the intended historical interpretation.
\end{corollary}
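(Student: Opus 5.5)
The corollary adds nothing to the complexity classification, so I would not reprove it. I would invoke \cref{thm:conditional-separation} to obtain $\LAc$, the relation $\RSEL$ in $\TFNPS\setminus\FPS$, and $P\ne NP$. Then I would check that adding (P1)--(P2) leaves every hypothesis used there intact. The reason is structural: (P1)--(P2) constrain only how the verifier's accepted strings arise, through the channel $\rho$ and the extractor. They do not change the extensional predicate $V$, the language $I$, the balance polynomial $p$, or the projection property (O5). The extensional relation $\RSEL$ of \eqref{eq:rsel} is therefore literally the same relation before and after (P1)--(P2) are imposed. By \cref{rem:temporal}, its membership in $\TFNPS\setminus\FPS$ is a property of that fixed relation alone, so the conclusion of \cref{thm:conditional-separation} carries over verbatim. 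That settles the clause ``the complexity conclusion is unchanged''.

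Next I would unfold (P1) and (P2) to get the interpretive claim. By (P1), for every valid $q=(x,e)$ the raw record is emitted by the precommitted, outcome-independent channel $\rho$ only after actualisation. The fixed polynomial-time extractor then turns that record into $\tau$. Combined with (O5), this shows that an accepted witness $(\alpha(q),\tau)$ exists and can be generated post-act from the historical record. Because neither $\rho$, the extractor, nor $V$ contains occurrence-specific advice, a future-outcome table, or an oracle for $\alpha$, the certificates are not a disguised encoding of the answer graph. By (P2), $V$ checks the link between $\tau$, the precommitted identifier $e$, and the record emitted by $\rho$. So each accepted certificate is tied to the actual completed occurrence, relative to whatever trust assumptions the application states.

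I expect the main obstacle to be the gap between ``some accepted certificate is historically generated'' and ``the certificates used by $\RSEL$ are historically generated''. The extensional accepted language may contain strings that were never produced by $\rho$. My first move would be to read ``used'' as the witnesses actually produced and presented in the protocol, namely those output by $\rho$ and the extractor. Strings outside that set are harmless: by (O5) they project to $\alpha(q)$ anyway, and the complexity argument never relied on their provenance. I would state that caveat explicitly, together with the dependence on the trust assumptions in (P2), rather than claim a stronger extensional conclusion.
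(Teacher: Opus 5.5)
Your proposal is correct and matches the paper's own (implicit) justification: the complexity part is just \cref{thm:conditional-separation}, because adding (P1)--(P2) as hypotheses only narrows the class of families and, as \cref{rem:temporal} and the discussion after \eqref{eq:rsel} stress, they play no role in the class-theoretic argument; the interpretive part is simply an unpacking of (P1)--(P2). Your caveat is a fair precision the paper leaves tacit: ``the certificates used'' should be read as those actually emitted through $\rho$ and the extractor, not the whole extensional accepted language.
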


The proof is included to make the paper logically autonomous. The computational preprint \cite{Clech2026a} supplies the fuller sheaf-to-search development and encoding conventions, but its acceptance or publication is not a premise of \cref{thm:conditional-separation}.

\subsection{Ontological interface}

\begin{theorem}[Interface to the ontological companion]\label{thm:ontological-interface}
Let $\mathfrak A$ realise $\LAnc$, and consider a sequential presentation in which each complete occurrence input $q_i$ is effectively equivalent to the causal frame $\langle i,h_i\rangle$ of \cite{Clech2026b}. Put $s_i^*=\alpha(q_i)$. Then no uniform Turing program computes $s_i^*$ from the causal frames throughout the family. If, in addition, the sheaf representation satisfies structural non-singularisation, then its global plurality, unpointed pre-act object, and act-level pointing instantiate the structural core denoted $\LAo$ in the ontological companion.
\end{theorem}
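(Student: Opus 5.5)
The proof has two parts, matching the two sentences of \cref{thm:ontological-interface}.

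\emph{Non-computability.} I argue by contraposition. Suppose some uniform Turing program $T$ satisfies $T(\langle i,h_i\rangle)=s_i^*$ for every member of the family. The sequential presentation is effectively equivalent to the occurrence encoding. So there is a computable translation $t$ with $t(q_i)=\langle i,h_i\rangle$ on valid inputs, and a computable inverse translation $u$ on continuations. The valid language $I$ is in P by \cref{def:effective-branches}.

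Now define a procedure $\beta$ on $\{0,1\}^*$. On input $q$, first decide whether $q\in I$. If not, output $\bot$. If so, output $u(T(t(q)))$. This $\beta$ is Turing-computable, and it agrees with $\bar\alpha$, because $s_i^*=\alpha(q_i)$ by definition. That contradicts the defining clause of $\LAnc$, namely that $\bar\alpha$ is not computable from the complete encoded pre-act inputs.

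The one point to check is that ``effectively equivalent'' gives computable translations in both directions, with preservation of the continuation component, in the sense of \cite{Clech2026b}. This is the same kind of composition argument as in \cref{prop:representation-robustness}, but at the level of computability rather than polynomial time. Mere computable intertranslation is therefore enough; no length bounds are needed.

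\emph{Structural core.} This is a read-off argument, as in \cref{thm:representation}. Global plurality comes from (A2) together with $G(q)=\Gamma(F_q)$, which gives $|\Gamma(F_{q_i})|\ge 2$. The unpointed pre-act object is $C_{q_i}$. It is unpointed in the three relevant senses:
\begin{itemize}
\item causally, by (A3-C);
\item anticipatorily, by (A3-P);
\item structurally, by the added hypothesis \eqref{eq:no-fixed-point}.
\end{itemize}
Act-level pointing comes from (A4) and (A5), which produce $C_{q_i}^+$ with the point $s_i^*$ attributed to the agency.

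The main obstacle is the matching of vocabularies. I would need to verify, clause by clause, that the companion's definition of $\LAo$ asks for exactly these three ingredients and nothing more. In particular, it must not require the incompressibility axiom $\AIK$. That axiom belongs only to $\LAo^K$, and the theorem deliberately does not claim it.

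If the companion's structural non-singularisation is phrased differently from \eqref{eq:no-fixed-point}, I would fall back on taking the hypothesis ``satisfies structural non-singularisation'' in the companion's own sense. The other components would then be supplied as above.
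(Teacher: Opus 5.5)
Your proposal is correct and takes essentially the paper's route. Your first part spells out, via the composition $u\circ T\circ t$ with the $I\in\mathrm P$ check, what the paper states in one line: the defining non-computability of $\bar\alpha$ in $\LAnc$ transfers across the effective equivalence $q_i\leftrightarrow\langle i,h_i\rangle$. Your second part matches the paper's read-off of plurality, absence of an intrinsic symmetry-preserving selector, and act-level pointing from \cref{thm:representation} and the sheaf realisation, including the remark that $\AIK$ is not implied.
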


\begin{proof}
By definition of $\LAnc$, no Turing-computable function selects $\alpha$ uniformly from the complete encoded pre-act inputs. Effective equivalence of $q_i$ with $\langle i,h_i\rangle$ transfers that conclusion to the sequential presentation. Under structural non-singularisation, \cref{thm:representation} and the sheaf realisation provide exactly the plurality, absence of an intrinsic symmetry-preserving selector, and pointing by the act used to define $\LAo$ in the ontological companion. Neither statement implies its stronger incompressibility axiom.
\end{proof}

\begin{corollary}[Imported ontological consequence]\label{cor:ontological-companion}
If the additional structural hypothesis of \cref{thm:ontological-interface}, the causal algorithmic incompressibility axiom $\AIK$, and the admissible reconstructive trace protocol of \cite{Clech2026b} hold, then the family realises that paper's strong model $\LAo^K$ and its information-adjunction and incompressibility results apply. The $\LAnc$ condition alone does not establish near-maximal Kolmogorov incompressibility \cite{Kolmogorov1965,LiVitanyi2019}.
\end{corollary}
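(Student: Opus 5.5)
The plan is to treat \cref{cor:ontological-companion} as a bookkeeping result. Its substantive content is imported from \cite{Clech2026b}, so the proof checks that the hypotheses available here match, clause by clause, the definition of the strong model $\LAo^K=\LAo+\AIK$ fixed there. Once that is done, the information-adjunction and incompressibility results of that paper apply verbatim.

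The first step invokes \cref{thm:ontological-interface}. From $\LAnc$ together with the structural non-singularisation hypothesis \eqref{eq:no-fixed-point}, it yields that the sequential presentation $q_i\simeq\langle i,h_i\rangle$, with $s_i^*=\alpha(q_i)$, instantiates $\LAo$. This supplies global plurality (A2), the unpointed pre-act object $C_q$, and act-level pointing $C_q^+$. In addition, no uniform Turing program computes $s_i^*$ from the causal frames. The second step adds the two hypotheses assumed outright, namely $\AIK$ for the sequence $(s_i^*)$ relative to the frames and the admissible reconstructive trace protocol. Together with the $\LAo$ structure, these are exactly the defining conjuncts of $\LAo^K$ in \cite{Clech2026b}. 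The third step observes that those results are stated for any $\LAo^K$ family, so they transfer once the notation concordance \eqref{eq:notation-concordance} is respected. I will check that $\LAo$, $\LAo^K$ and $\AIK$ keep their companion meanings and are not the new symbols $\LAeff$ and $\LAnc$.

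For the final negative clause, I will exhibit a family in $\LAnc$ whose realised sequence is highly compressible, so that $\AIK$ fails. The natural choice is $s_i^*=\chi_K(i)$, the characteristic bit of the halting set, placed on a two-element admissible set $G(q_i)$. This map is noncomputable, yet its length-$n$ prefixes have Kolmogorov complexity $O(\log n)$ \cite{LiVitanyi2019}. Hence noncomputability does not give near-maximal incompressibility.

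The main obstacle is making sure the countermodel is a legitimate substantive agency frame, and not merely a noncomputable function. The constitutive clauses are independent of computability, so I would declare $\mathcal D(q)=\mathcal P_{\rm act}(q)=\varnothing$ by fiat within a toy causal interpretation, with sourcehood stipulated. I then only need to note that \cref{def:saf} places no incompressibility requirement on $\alpha$. A second point requires care. The compressed description of the prefix counts as low complexity only in the unconditional sense. I must therefore check which conditioning $\AIK$ uses in \cite{Clech2026b}, conditional on the frames or not. If it is conditional, I would replace $K$ with a noncomputable set whose initial segments remain compressible given the frames, for example by encoding a time-bounded halting structure into the $h_i$.
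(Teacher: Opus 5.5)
Your proposal is correct and takes essentially the paper's approach. The paper gives no separate proof: the positive part is the import of $\LAo$ from \cref{thm:ontological-interface} together with $\AIK$ and the trace protocol, and your halting-set countermodel makes explicit the standard fact the paper only cites for the negative clause. One correction: your worry about conditioning is unfounded, because $K(x\mid y)\le K(x)+O(1)$, so the $O(\log n)$ prefix bound for $\chi_K$ already defeats any frame-conditional form of $\AIK$, and you can drop the fallback construction.
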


\subsection{Domain-relative classification}

\begin{proposition}[Effective-domain classification]\label{thm:trilemma}
Let a candidate family have an effective presentation. If it fails either substantive agency or effective-pointing closure, case (1) records that failure relative to the theorem's declared domain. Otherwise exactly one of cases (2) and (3) holds:
\begin{enumerate}[label=(\arabic*)]
 \item the family does not jointly instantiate substantive agency, passive-extension invariance, and effective deployment;
 \item it instantiates substantive agency and realises $\LAeff$;
 \item it instantiates substantive agency and realises $\LAnc$.
\end{enumerate}
\end{proposition}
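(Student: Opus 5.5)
The plan is a case split on the hypotheses of the statement, with \cref{thm:bifurcation} doing all the substantive work. Fix an effectively presented candidate family. Either it fails at least one of substantive agency (the clauses of \cref{def:saf}) or effective-pointing closure (\cref{def:acp}, the conjunction of passive-extension invariance and the effective deployment principle), or it satisfies all of them.

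\emph{First case.} Here case (1) holds by definition: failing substantive agency, passive-extension invariance, or effective deployment is exactly the failure to instantiate all three jointly. So the first sentence of the statement is recorded rather than derived. Nothing more is claimed about such a family: it may be compatibilist, illusionist, reflexive, resource-sensitive, and so on.

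\emph{Second case.} The family is an effectively presented substantive agency frame satisfying effective-pointing closure, so \cref{thm:bifurcation} applies directly. The totalised map $\bar\alpha$ is either Turing-computable or not.
\begin{enumerate}[label=(\alph*)]
 \item If $\bar\alpha$ is computable, \cref{thm:predictive-collapse} gives $\bar\alpha\notin\mathrm{FP}$, hence $\LAeff$ and case (2).
 \item Otherwise the family realises $\LAnc$, which is case (3).
\end{enumerate}
Mutual exclusivity of (2) and (3) comes from the definitions in \cref{def:effective-branches}: $\LAeff$ requires $\bar\alpha$ computable and $\LAnc$ requires it noncomputable. Neither (2) nor (3) can co-occur with (1), since both presuppose the hypotheses whose failure defines (1).

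\emph{Main obstacle.} The only delicate point is interpretive. Case (1) is phrased in terms of the three components while the hypothesis is phrased as ``substantive agency or effective-pointing closure''. I would make explicit that, by \cref{def:acp}, the negation of the conjunction of the hypotheses is logically equivalent to (1). The three cases therefore partition the effectively presented families, with exactly one of (2) and (3) when (1) fails.

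I would also stress that this is domain-relative: effective presentation is a standing assumption, so no fourth case for non-effective presentations arises. The proof then reduces to a short invocation of \cref{thm:bifurcation} together with this propositional bookkeeping.
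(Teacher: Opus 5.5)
Your proposal is correct and matches the paper's treatment. The paper gives no separate proof: it presents the proposition as a logical partition in which case (1) records failure of substantive agency or of effective-pointing closure (unpacked via \cref{def:acp}), and \cref{thm:bifurcation} supplies the exclusive split into (2) and (3) otherwise.

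One small caution concerns your remark that neither (2) nor (3) can co-occur with (1). Read literally, (2) and (3) mention only substantive agency, not closure. For example, a substantive-agency family that fails passive-extension invariance and has $\bar\alpha$ computable outside $\mathrm{FP}$ satisfies both (1) and (2). So that remark holds only if (2) and (3) are read as sub-cases of the ``otherwise'' branch. The statement itself asserts only that (2) and (3) exclude each other under the hypotheses, so nothing in your argument depends on the remark.
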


The classification is exhaustive only inside its explicitly declared domain and is a logical partition rather than an independent lower-bound theorem. Failure of effective presentation lies outside it; failure of substantive agency, passive-extension invariance, or effective deployment lies in case (1), not in a hidden fourth computability status. Actual non-pointing follows from singularising priority and actual access-completeness; the counterfactual obstruction follows separately from \cref{thm:modal-incompatibility}. Branch (2) together with (O1)--(O6) yields $\LAc$ and activates the $P\ne NP$ consequence. Conditions O1--O5 exclude branch (3). Failure of certified actualisation is failure of an additional scientific interface.

\section{Scope, correspondence, and points of refusal}\label{sec:scope}

\subsection{Exact versus statistical anticipation}

The argument concerns exact uniform selection over an unbounded family. It is compatible with:
\begin{itemize}
 \item high-probability prediction;
 \item prediction on finitely many or typical occurrences;
 \item heuristics effective on practically relevant instance sizes;
 \item bounded rationality and incomplete pre-act models;
 \item deterministic chaos that is difficult to calculate in practice;
 \item stochastic models that predict distributions rather than realised choices.
\end{itemize}

Chaos and computational expense can produce practical unpredictability without substantive openness. Conversely, physical randomness can defeat exact prediction without sourcehood. The target is narrower and stronger: an exact, passive, uniform procedure which, from the declared complete pre-act input, selects the continuation that the agent will actualise throughout an unbounded family.

\subsection{Human correspondence and empirical limits}

The formal results apply to encoded families, not directly to human beings. To connect them to human agency one needs a correspondence principle of the following form.

\begin{definition}[Human correspondence principle]\label{def:hcp}
A class of human decision processes is adequately modelled by a substantive agency frame when the specified histories capture the information relevant to the decision problem, the admissible continuations correspond to genuine action alternatives, the actualisation map corresponds to completed acts, anticipatory availability is assessed in a declared operational regime, and sourcehood is supported by an independently defensible theory of agency.
\end{definition}

No finite behavioural dataset can establish the absence of a polynomial algorithm over an infinite family. Empirical evidence can support the adequacy of the representation, defeat particular predictors, or constrain proposed mechanisms. It cannot by itself prove an asymptotic lower bound. Likewise, a philosophical argument can justify why exact efficient anticipation would undermine substantive openness without proving that human neural or physical processes realise a particular complexity class.

The interdisciplinary contribution is therefore a conditional transfer theorem, not a new generic lower-bound technique or a reduction of the human to a machine model. If a human or non-human family instantiates effectively presented substantive agency, passive-extension invariance, and effective deployment, \cref{thm:bifurcation} forces it into $\LAeff$ or $\LAnc$. Rejection must be located precisely: causal forcing, sourcehood, singularising priority, actual access-completeness, passive-extension invariance, effective deployment, or effective presentation. Failure of certification blocks the route from $\LAeff$ to $\LAc$; failure of (P1)--(P2) blocks its historical interpretation. Conversely, once (O1)--(O6) hold, \cref{thm:conditional-separation} excludes $\LAnc$ and establishes the conditional computational consequence.

The theorem does not establish that compatibilism and illusionism are the only philosophical positions outside the model. Source-incompatibilist views rejecting alternative possibilities or theories denying finite effective presentation may reject a declared premise. An account claiming operationally available exact foreknowledge while retaining substantive singularisation must instead deny either informational closure or the priority claim that the act first points the continuation; it cannot retain all three. The paper's claim is conditional but constraining: the location and computational consequence of each refusal are made explicit.

\subsection{Limits and exact points of refusal}\label{sec:objections}

The architecture makes each principled point of refusal explicit and assigns it a precise consequence.

\paragraph{Causality, prescience, and pointing.} Condition (A3-C) now has the forcing semantics of \cref{def:causal-forcing}. Condition (A3-P) concerns points actually present. The modal result is explicitly an incompatibility theorem: singularising priority, passive-extension invariance, and a passively deployable exact selector cannot coexist. The FP corollary additionally requires effective deployment. Either bridge premise remains contestable, especially for reflexive or physically resource-bounded agents.

\paragraph{Standard algorithms and representation.} Equation \eqref{eq:no-fixed-point} excludes only equivariant intrinsic selectors. The FP conclusion instead follows from effective-pointing closure applied to every standard polynomial algorithm on the fixed encoding. \Cref{prop:representation-robustness} separately prevents the result from depending on polynomially equivalent presentations.

\paragraph{Certified actualisation.} The formal search consequence uses O1--O6. Condition O5 is an independent extensional unique-projection axiom; it is not supplied by ordinary recording, signatures, or authentication. Conditions (P1)--(P2) separately require a precommitted post-act channel and verifiable linkage, excluding future-answer tables, advice, and oracles. They do not strengthen the extensional proof; they determine whether it represents historical actualisation. \Cref{thm:realisation-gap} shows that certification cannot be inferred from computability.

\paragraph{Sourcehood.} Excluding an independent randomiser does not provide a complete mathematical theory of endogenous agency. A5 is explicitly an interpretive interface and any human application requires a separate theory of causal attribution.

\paragraph{Existence and exhaustiveness.} The $\LAeff/\LAnc$ dichotomy is exhaustive only for effectively presented substantive agency under effective-pointing closure. Neither this paper nor the companions prove that a natural human family satisfies all clauses of $\LAc$. The new corollary proves the universal raccord for every family that does satisfy them; it does not supply the existential or human-correspondence premise.

\paragraph{Search conventions.} \Cref{def:search-classes} distinguishes $\FPS$, $\FNPS$, and the standard total subclass $\TFNPS$. O1--O6 include the decidable domain, balance, and input discipline; the proof of \cref{thm:conditional-separation} gives the padded prefix-search reduction explicitly.

These limits remain explicit because the programme's force depends on keeping causal characterisation, computational status, certification, and human correspondence logically separate.

\section{Conclusion}

The paper establishes two principal results. First, substantive singularising priority, passive-extension invariance, and passive deployment of an exact pre-act selector are jointly incompatible. This yields a substantive dilemma: exact passive anticipation requires either loss of the act's first-pointing office or acceptance that its substantive status changes under a causally inert informational enlargement. Under the independently stated effective-deployment principle, choosing invariance excludes every FP selector of the actualisation map. Second, when completed acts additionally admit polynomially bounded, polynomially verifiable certificates with extensional unique projection, the certified graph interface turns historical actualisation into a total standard search relation. The associated relation lies in $\TFNPS\setminus\FPS$, and a self-contained search argument conditionally yields $P\ne NP$.

The route to those results is deliberately stratified. Coherent plurality, causal non-determination, anticipatory non-pointing, act-level singularisation, and endogenous sourcehood constitute the substantive core; no complexity assumption enters that core. Choice frames and sheaves represent the transition from compatible local data to several global continuations and then to a post-act point. They make the distinction between $\GLUE$ and $\SELECT$ precise, while the lower bound itself comes only from the modal and certification bridges. On an effective presentation, the actualisation map is either computable outside FP, yielding $\LAeff$, or noncomputable, yielding $\LAnc$.

The remaining questions are realisation questions rather than gaps hidden inside the proofs. The paper does not establish that human beings instantiate effective-pointing closure or certified actualisation. The realisation-gap theorem shows, moreover, that $\LAeff$ alone does not entail polynomially verifiable traces. If (O1)--(O6) hold, unique projection excludes the noncomputable branch and activates the conditional separation theorem; (P1)--(P2) justify the intended non-oracular historical interpretation. If the noncomputable branch holds instead, the ontological interface supplies uniform noncomputability, while near-maximal incompressibility still requires the additional axiom $\AIK$. The contribution is therefore a precise conditional architecture: each route to the complexity conclusion, and each principled point at which it may be refused, is explicitly identified.

\paragraph*{Acknowledgements.}
Generative AI tools were used solely for language editing, \LaTeX{} preparation, formatting, and mechanical consistency checks. The author independently developed and verified all conceptual and mathematical content and assumes full responsibility for the manuscript.

\bibliographystyle{plain}
\bibliography{references}

\end{document}